\newcommand{\mathdef}[1]{\hypertarget{def:#1}{}}
\newenvironment{blacklinks}
  {\hypersetup{linkcolor=.}}
  {\hypersetup{linkcolor=ACMPurple}}
\newcommand{\blackref}[2]{\begin{blacklinks}\hyperlink{#1}{#2}\end{blacklinks}}  
\newcommand{\updateconflicts}[3]{\blackref{def:updateconflicts}{\msf{update}\text{-}\msf{conflicts}}\left(#1,#2,#3\right)}
\newcommand{\refs}[2]{\blackref{def:refs}{{#1}\text{-}\msf{refs}}(#2)}
\newcommand{\loans}[2]{\hyperlink{def:loans}{{#1}\text{-}\msf{loans}}(#2)}
\theoremstyle{definition}
\Crefname{principle}{Principle}{Principles}
\xdef\@thefnmark{\@empty}
\newcounter{typerule}
\newcommand{\typeruleInt}[4]{
	\def\thetyperule{#1}%
	\refstepcounter{typerule}%
	\label{tr:#4}%
  \inferrule[#1]{#2}{#3}
}
\newcommand{\typerule}[4]{
  \typeruleInt{#1}{#2}{#3}{#4}
}
\newenvironment{btHighlight}[1][]
{\begingroup\tikzset{bt@Highlight@par/.style={#1}}\begin{lrbox}{\@tempboxa}}
{\end{lrbox}\bt@HL@box[bt@Highlight@par]{\@tempboxa}\endgroup}
\newcommand\btHL[1][]{%
  \begin{btHighlight}[#1]\bgroup\aftergroup\bt@HL@endenv%
}
\newcommand{\btHLval}{\btHL[fill=ForestGreen!30]}
\def\bt@HL@endenv{%
  \end{btHighlight}%
  \egroup
}
\newcommand{\bt@HL@box}[2][]{%
  \tikz[#1]{%
    \pgfpathrectangle{\pgfpoint{1pt}{0pt}}{\pgfpoint{\wd #2}{\ht #2}}%
    \pgfusepath{use as bounding box}%
    \node[anchor=base west, fill=Peach!30,outer sep=0pt,inner xsep=1pt, inner ysep=0pt, rounded corners=2pt, minimum height=\ht\strutbox+1pt,#1]{\raisebox{1pt}{\strut}\strut\usebox{#2}}; 
  }%
}
\definecolor{pred}{RGB}{230,159,0}
\definecolor{pgrn}{RGB}{86,180,233}
\definecolor{pblu}{RGB}{0,158,115}
\definecolor{ppnk}{RGB}{240,228,66}
\definecolor{ppur}{RGB}{213,94,0}
\definecolor{porg}{RGB}{204,121,167}
\definecolor{Peach}{RGB}{253,137,85}
\definecolor{ForestGreen}{RGB}{0,144,79}
\newcommand{\msf}[1]{\mathsf{#1}}
\newcommand{\msfb}[1]{\text{\textbf{\small\texttt{#1}}}}
\newcommand{\rust}{\lstinline}
\newcommand{\shrd}{\msf{shrd}}
\newcommand{\uniq}{\msf{uniq}}
\newcommand{\bnfm}[3]{\msf{#1}~{#2} ::=&\ {#3}}
\newcommand{\tyvar}[1]{\tau^\textsc{#1}}
\newcommand{\tysize}{\tyvar{si}}
\newcommand{\eref}[3]{\&{#2}\,{#1}\,{#3}}
\newcommand{\fndef}{
    \msfb{fn}~f \langle \overline{\varphi}, \overline{\varrho}, \overline{\alpha}, \overline{\varrho_1 : \varrho_2} \rangle (x : \tysize_a) \rightarrow \tysize_r  ~ \{ ~ e ~ \}
}
\newcommand{\uty}{\msf{u32}}
\newcommand{\unitty}{\msf{unit}}
\newcommand{\olet}[4]{\msfb{let}~{#1} : {#2}~=~{#3};~{#4}}   
\newcommand{\letprov}[2]{\msfb{letprov}\langle {#1} \rangle ~ {#2}}
\newcommand{\framed}[1]{\msf{framed}~{#1}}
\newcommand{\disjoint}[2]{{#1} \blackref{def:disjoint}{\mathrel{\#}} {#2}}
\newcommand{\notdisjoint}[2]{{#1} \blackref{def:notdisjoint}{\mathop{\sqcap}} {#2}}
\newcommand{\loanset}{\{\overline{l}\}}
\newcommand{\evalsto}[4]{\Sigma \vdash ({#1}; {#2}) \xrightarrow{\ast} ({#3}; {#4})}
\newcommand{\stepsto}[4]{\Sigma \vdash ({#1}; {#2}) \rightarrow ({#3}; {#4})}  
\newcommand{\valuectx}{\mathcal{V}}
\newcommand{\stacksep}{\mathrel{\natural}}
\newcommand{\tc}[6]{{#1}; {#2}; {#3} \vdash {#4} : {#5} \Rightarrow {#6}}
\newcommand{\tcnew}[8]{{#1}; {#2}; {#3}; {#4} \vdash {#5} : {#6} \Rightarrow {#7}; {#8}}
\definecolor{highlight}{RGB}{180,30,30}
\newcommand{\hlight}[1]{\textcolor{highlight}{#1}}
\newcommand{\hlt}[1]{\hlight{#1}}
\definecolor{deemphasize}{RGB}{120,120,120}
\newcommand{\gray}[1]{\textcolor{deemphasize}{#1}}
\newcommand{\loc}{\ell}
\newcommand{\withslice}[1]{~\bullet~{#1}}
\newcommand{\medcup}{\textstyle \bigcup}
\newcommand{\stepped}[1]{#1'}
\newcommand{\deps}[2]{\blackref{def:deps}{\msf{deps}}({#1}, {#2})}
\newcommand{\baseline}{\textsc{Modular}}
\newcommand{\wholeprogram}{\textsc{Whole-program}}
\newcommand{\mutblind}{\textsc{Mut-blind}}
\newcommand{\pointerblind}{\textsc{Ref-blind}}
\newcommand{\eqdef}{%
    ~\mathop{\overset{\mathrm{def}}{\resizebox%
      {\widthof{\ensuremath{\mathop{\overset{\mathrm{def}}{=}}}}}%
      {\heightof{=}}{=}}}~}
\newcommand{\savetheorem}[1]{\newtheorem*{#1}{Theorem~\ref{thm:#1}}}
\newcommand{\stackeq}[1]{\blackref{def:stackeq}{\mathop{\sim_{#1}}}}
\newcommand{\stackdepeq}[2]{\blackref{def:stackdepeq}{\mathop{\sim^{#1}_{#2}}}}
\newcommand{\mut}{\msf{mut}}
\newcommand{\any}{\msf{any}} 
\newcommand{\loan}{\msf{loan}}
\newcommand{\arrg}{\msf{arg}}
\newcommand{\rd}{\msf{read}}
\newcommand{\mutptr}{\msf{ptr}}
\newenvironment{proofcases}[1][*]
  {\begin{enumerate}[(a),itemsep=2pt,leftmargin=#1]}
  {\end{enumerate}}
\newenvironment{proofsteps}[1][*]
  {\begin{enumerate}[1.,itemsep=2pt,leftmargin=#1]}
  {\end{enumerate}}
\newcommand{\sysname}{Flowistry}
\begin{document}

\title{Modular Information Flow through Ownership}

\author{Will Crichton}
\orcid{0000-0001-8639-6541}
\affiliation{%
  \institution{Stanford University}
  \city{Stanford}
  \country{USA}
}
\email{wcrichto@cs.stanford.edu}

\author{Marco Patrignani}
\orcid{0000-0003-3411-9678}    
\affiliation{
  \institution{University of Trento}
  \city{Trento}
  \country{Italy}
}
\email{marco.patrignani @ unitn.it}         

\author{Maneesh Agrawala}
\author{Pat Hanrahan}
\affiliation{%
  \institution{Stanford University}
  \city{Stanford}
  \country{USA}
}

\begin{abstract}
Statically analyzing information flow, or how data influences other data within a program, is a challenging task in imperative languages. Analyzing pointers and mutations requires access to a program's complete source. However, programs often use pre-compiled dependencies where only type signatures are available. We demonstrate that ownership types can be used to soundly and precisely analyze information flow through function calls given only their type signature. From this insight, we built Flowistry, a system for analyzing information flow in Rust, an ownership-based language. We prove the system's soundness as a form of noninterference using the Oxide formal model of Rust. Then we empirically evaluate the precision of Flowistry, showing that modular flows are identical to whole-program flows in 94\% of cases drawn from large Rust codebases. We illustrate the applicability of Flowistry by using it to implement prototypes of a program slicer and an information flow control system.
\end{abstract}

\begin{CCSXML}
<ccs2012>
   <concept>
       <concept_id>10011007.10010940.10010992.10010998.10011000</concept_id>
       <concept_desc>Software and its engineering~Automated static analysis</concept_desc>
       <concept_significance>500</concept_significance>
       </concept>
 </ccs2012>
\end{CCSXML}

\ccsdesc[500]{Software and its engineering~Automated static analysis}

\keywords{information flow, ownership types, rust}  

\maketitle

\section{Introduction}
\label{sec:intro}

Information flow describes how data influences other data within a program. Information flow has applications to security, such as information flow control\,\cite{sabelfeld2003language}, and to developer tools, such as program slicing\,\cite{weiser1984program}. Our goal is to build a practical system for analyzing information flow, meaning:

\begin{enumerate}[leftmargin=*]
    \item \textbf{Applicable to common language features:} the language being analyzed  should support widely used features like pointers and in-place mutation. 
    \item \textbf{Zero configuration to run on existing code:} the analyzer must integrate with an existing language and existing unannotated programs. It must not require users to adopt a new language designed for information flow.
    \item \textbf{No dynamic analysis:} to reduce integration challenges and costs, the analyzer must be purely static --- no modifications to runtimes or binaries are needed.
    \item \textbf{Modular over dependencies:} programs may not have source available for dependencies. The analyzer must have reasonable precision without whole-program analysis.
\end{enumerate}

As a case study on the challenges imposed by these requirements, consider analyzing the information that flows to the return value in this C++ function:
\smallskip
\begin{lstlisting}[language=C++]
// Copy elements 0 to max into a new vector
vector<int> copy_to(vector<int>& v, size_t max) {
  vector<int> v2; size_t i = 0;
  for (auto x(v.begin()); x != v.end(); ++x) {
    if (i == max) { break; }
    v2.push_back(*x); ++i;
  }
  return v2;
}
\end{lstlisting}
\smallskip
Here, a key flow is that \verb|v2| is influenced by \verb|v|: (1) \verb|push_back| mutates \verb|v2| with \verb|*x| as input, and (2) \verb|x| points to data within \verb|v|. But how could an analyzer statically deduce these facts? For C++, the answer is \textit{by looking at function implementations}. The implementation of \verb|push_back| mutates \verb|v2|, and the implementation of \verb|begin| returns a pointer to data in \verb|v|. 

However, analyzing such implementations violates our fourth requirement, since these functions may only have their type signature available. In C++, given only a function's type signature, not much can be inferred about its behavior, since the type system does not contain information relevant to pointer analysis.

Our key insight is that \textit{ownership types} can be leveraged to modularly analyze pointers and mutation using only a function's type signature. Ownership has emerged from several intersecting lines of research on linear logic\,\cite{girard1987linear}, class-based alias management\,\cite{clarke1998ownership}, and region-based memory management\,\cite{grossman2002region}. The fundamental law of ownership is that data cannot be simultaneously aliased and mutated. Ownership-based type systems enforce this law by tracking which entities own which data, allowing ownership to be transferred between entities, and flagging ownership violations like mutating immutably-borrowed data.

Today, the most popular ownership-based language is Rust. Consider the information flows in this Rust implementation of \verb|copy_to|:
\smallskip
\begin{lstlisting}
fn copy_to(v: &Vec<i32>, max: usize) -> Vec<i32> {
  let mut v2 = Vec::new();
  for (i, x) in v.iter().enumerate() {
    if i == max { break; }
    v2.push(*x);
  }
  return v2;
}
\end{lstlisting}
\smallskip
\noindent Focus on the two methods \lstinline|push| and \lstinline|iter|. For a \lstinline|Vec<i32>|, these methods have the following type signatures:
\begin{lstlisting}
fn push(&mut self, value: i32);
fn iter<'a>(&'a self) -> Iter<'a, i32>;
\end{lstlisting}
To determine that \lstinline|push| mutates \lstinline|v2|, we leverage \textit{mutability modifiers}. All references in Rust are either immutable (i.e. the type is \lstinline|&T|) or mutable (the type is \lstinline|&mut T|). Therefore \lstinline|iter| does not mutate \lstinline|v| because it takes \lstinline|&self| as input (excepting interior mutability, discussed in \Cref{sec:limitations}), while \lstinline|push| may mutate \lstinline|v2| because it takes \lstinline|&mut self| as input.
 
To determine that \lstinline|x| points to \lstinline|v|, we leverage \textit{lifetimes}. All references in Rust are annotated with a lifetime, either explicitly (such as \lstinline|'a|) or implicitly. Shared lifetimes indicate aliasing: because \lstinline|&self| in \lstinline|iter| has lifetime \lstinline|'a|, and because the returned \verb|Iter| structure shares that lifetime, then we can determine that \lstinline|Iter| may contain pointers to \lstinline|self|.

Inspired by this insight, we built \sysname{}, a system for analyzing information flow in the safe subset of Rust programs. \sysname{} satisfies our four design criteria: (1) Rust supports pointers and mutation, (2) \sysname{} does not require any change to the Rust language or to Rust programs, (3) \sysname{} is a purely static analysis, and (4) Flowistry uses ownership types to analyze function calls without needing their definition. This paper presents a theoretical and empirical investigation into \sysname{} in five parts:
\begin{enumerate}[leftmargin=*]
    \item We provide a precise description of how \sysname{} computes information flow by embedding its definition within Oxide\,\cite{weiss2019oxide}, a formal model of Rust (\Cref{sec:analysis}).
    \item We prove the soundness of our information flow analysis as a form of noninterference (\Cref{sec:soundness}).
    \item We describe the implementation of \sysname{} that bridges the theory of Oxide to the practicalities of Rust (\Cref{sec:implementation}).
    \item We evaluate the precision of the modular analysis on a dataset of large Rust codebases, finding that modular flows are identical to whole-program flows in 94\% of cases, and are on average 7\% larger in the remaining cases (\Cref{sec:evaluation}).
    \item We demonstrate the utility of \sysname{} by using it to prototype a program slicer and an IFC checker (\Cref{sec:applications}).
\end{enumerate}

We conclude by presenting related work (\Cref{sec:rw}) and discussing future directions for \sysname{} (\Cref{sec:discussion}).
Due to space constraints, we omit many formal details, all auxiliary lemmas, and all proofs.
The interested reader can find them in \Cref{sec:appendix}.
\sysname{} and our applications of it are publicly available, open-source, MIT-licensed projects at \url{https://github.com/willcrichton/flowistry}.

\section{Analysis}
\label{sec:analysis}

Inspired by the dependency calculus of Abadi et al.\,\cite{abadi1999core}, our analysis represents information flow as a set of dependencies for each variable in a given function. The analysis is flow-sensitive, computing a different dependency set at each program location, and field-sensitive, distinguishing between dependencies for fields of a data structure.

While the analysis is implemented in and for Rust, our goal here is to provide a description of it that is both concise (for clarity of communication) and precise (for amenability to proof). We therefore base our description on Oxide\,\cite{weiss2019oxide}, a formal model of Rust. At a high level, Oxide provides three ingredients: 
\begin{enumerate}[leftmargin=*]
    \item A syntax of Rust-like programs with expressions $e$ and types $\tau$.
    \item A type-checker, expressed with the judgment $\tc{\Sigma}{\Delta}{\Gamma}{e}{\tau}{\Gamma'}$ using the contexts $\Gamma$ for types and lifetimes, $\Delta$ for type variables, and $\Sigma$ for global functions.
    \item An interpreter, expressed by a small-step operational semantics with the judgment $\stepsto{\sigma}{e}{\stepped{\sigma}}{e'}$ using $\sigma$ for a runtime stack.
\end{enumerate}

We extend this model by assuming that each expression in a program is automatically labeled with a unique location $\loc$. Then for a given expression $e$, our analysis computes the set of dependencies $\kappa ::= \{\overline{\loc}\}$. Because expressions have effects on persistent memory, we further compute a \textit{dependency context} $\Theta ::= \{\overline{p \mapsto \kappa}\}$ from memory locations $p$ to dependencies $\kappa$. The computation of information flow is intertwined with type-checking, represented as a modified type-checking judgment (additions highlighted in red):
$$
\tcnew{\Sigma}{\Delta}{\Gamma}
  {\hlt{\Theta}}
  {e_{\hlt{\loc}}}
  {\tau\hlt{\withslice{\kappa}}}
  {\Gamma'}
  {\hlt{\Theta'}}
$$

\noindent This judgment is read as, ``with type contexts $\Sigma, \Delta, \Gamma$ and dependency context $\hlt{\Theta}$, $e$ at location $\hlt{\loc}$ has type $\tau$ and dependencies $\hlt{\kappa}$, producing a new dependency context $\hlt{\Theta'}$.''

Oxide is a large language --- describing every feature, judgment, and inference rule would exceed our space constraints. Instead, in this section we focus on a few key rules that demonstrate the novel aspects of our system. We first lay the foundations for dealing with variables and mutation (\Cref{sec:places}), and then describe how we modularly analyze references (\Cref{sec:references}) and function calls (\Cref{sec:funcalls}). 
The remaining rules can be found in \Cref{sec:more_rules}.


\subsection{Variables and mutation}
\label{sec:places}

The core of Oxide is an imperative calculus with constants and variables. The abstract syntax for these features is below:
\begin{gather*}
\begin{aligned}
\msf{Variable}~x \hspace{12pt} \msf{Number}~n 
\end{aligned}
\\
\begin{aligned}
\bnfm{Path}{q}{\varepsilon \mid n.q} \\
\bnfm{Place}{\pi}{x.q} \\
\bnfm{Constant}{c}{() \mid n \mid \msf{true} \mid \msf{false}} \\
\bnfm{Base~Type}{\tyvar{b}}{\unitty\mid \uty \mid \msf{bool}}  \\
\bnfm{Sized~Type}
  {\tyvar{si}}
  {\tyvar{b} \mid 
   (\tyvar{si}_1, \ldots, \tyvar{si}_n)} \mid \ldots \\
\bnfm{Expression}{e}{}
    c \mid \pi \mid \olet{x}{\tyvar{si}_a}{e_1}{e_2} \mid \\
    &\pi := e \mid e_1;~e_2 \mid \ldots
\end{aligned}
\end{gather*}

Constants are Oxide's atomic values and also the base-case for information flow. A constant's dependency is simply itself, expressed through the \Cref{tr:tu32} rule:
\vspace{4pt}
\begin{mathpar}
\typerule{T-u32}
  {\ }
  {
    \tcnew{\Sigma}{\Delta}{\Gamma}
    {\hlt{\Theta}}
    {n_{\hlt{\loc}}}
    {\uty \hlt{\withslice{\{\loc\}}}}
    {\Gamma}{\hlt{\Theta}}
  }
  {tu32}
\end{mathpar}

Variables and mutation are introduced through let-bindings and assignment expressions, respectively. For example, this (location-annotated) program mutates a field of a tuple:
\[
    \olet{t}{(\uty, \uty)}{(1_{\loc_1}, 2_{\loc_2})}{t.1 := 3_{\loc_3}}
\]

Here, $t$ is a variable and $t.1$ is a \textit{place}, or a description of a specific region in memory. For information flow, the key idea is that let-bindings introduce a set of places into $\Theta$, and then assignment expressions change a place's dependencies within $\Theta$.
In the above example, after binding $t$, then $\Theta$ is: 
$$\Theta = \{t, t.0, t.1 \mapsto \{\loc_1, \loc_2\}\}$$
After checking ``$t.1 := 3$'', then $\loc_3$ is added to $\Theta(t)$ and $\Theta(t.1)$, but not $\Theta(t.0)$. This is because the values of $t$ and $t.1$ have changed, but the value of $t.0$ has not. Formally, the let-binding rule is:
\vspace{4pt}
\begin{mathpar}
\typerule{T-Let}
  {
    \tcnew{\Sigma}{\Delta}{\Gamma}{\hlight{\Theta}}{e_1}{\tyvar{si}_1 \hlight{\withslice{\kappa_1}}}{\Gamma_1}{\hlight{\Theta_1}} 
    \\
    \gray{\Gamma; \Delta_1 \vdash \tyvar{si}_1 \lesssim \tyvar{si}_a \Rightarrow \Gamma'_1}
    \\
    \hlight{\Theta_1' = \Theta_1[\forall \pi = \pi^\square[x] ~ . ~ \pi \mapsto \kappa_1]}
    \\
    \tcnew{\Sigma}{\Delta}
      {\text{gc-loans}(\Gamma'_1, x : \tyvar{si}_a)}
      {\hlight{\Theta_1'}}{e_2}
      {\tyvar{si}_2 \hlight{\withslice{\kappa_2}}}
      {\Gamma_2, x : \tyvar{sd}}{\hlight{\Theta_2}}
  }
  {
    \tcnew{\Sigma}{\Delta}{\Gamma}{\hlight{\Theta}}
    {\olet{x}{\tyvar{si}_a}{e_1}{e_2}}
    {\tyvar{si}_2 \hlight{\withslice{\kappa_2}}}
    {\Gamma_2}{\hlight{\Theta_2}}
  }{tlet}
\end{mathpar}

Again, this rule (and many others) contain aspects of Oxide that are not essential for understanding information flow such as the subtyping judgment $\tau_1 \lesssim \tau_2$ or the metafunction $\mathsf{gc}\text{-}\mathsf{loans}$. For brevity we will not cover these aspects here, and instead refer the interested reader to \citet{weiss2019oxide}. 
We have deemphasized (in grey) the judgments which are not important to understanding our information flow additions.

The key concept is the formula $\Theta_1[\forall \pi = \pi^\square[x] ~ . ~ \pi \mapsto \kappa_1]$. This introduces two shorthands: first, $\pi^\square[x]$ means ``a place $\pi$ with root variable $x$ in a context $\pi^\square$'', used to decompose a place. In \Cref{tr:tlet}, the update to $\Theta_1$ happens for all places with a root variable $x$. Second, $\Theta_1[\pi \mapsto \kappa_1]$ means ``set $\pi$ to $\kappa_1$ in $\Theta_1$''. So this rule specifies that when checking $e_2$, all places within $x$ are initialized to the dependencies $\kappa_1$ of $e_1$.

Next, the assignment expression rule is defined as updating all the \textit{conflicts} of a place $\pi$:
\vspace{4pt}
\begin{mathpar}
\typerule{T-Assign}
  {
    \tcnew{\Sigma}{\Delta}{\Gamma}{\hlight{\Theta}}{e}{\tyvar{si} \hlight{\withslice{\kappa}}}{\Gamma_1}{\hlight{\Theta_1}} 
    \\
    \gray{\Gamma_1(\pi) = \tyvar{sx} }
    \\
    \gray{(\tyvar{sx} = \tyvar{sd} \vee \Delta; \Gamma_1 \vdash_\uniq{} \pi \Rightarrow \{~^\uniq{} \pi\}) }
    \\
    \gray{\Delta; \Gamma_1 \vdash \tyvar{si} \lesssim \tyvar{sx} \Rightarrow \Gamma'}
    \\
    \hlight{\Theta_2 = \Theta_1[\updateconflicts{\Theta_1}{\pi}{\kappa}]}
  }
  {
    \tcnew{\Sigma}{\Delta}{\Gamma}{\hlight{\Theta}}{\pi := e}{\unitty \hlight{\withslice{\varnothing}}}
   {\Gamma'[\pi \mapsto \tyvar{si}] \vartriangleright \pi}{\hlight{\Theta_2}}
  }{tassign}
\end{mathpar}

If you conceptualize a type as a tree and a path as a node in that tree, then a node's conflicts are its ancestors and descendants (but not siblings). Semantically, conflicts are the set of places whose value change if a given place is mutated. Recall from the previous example that $t.1$ conflicts with $t$ and $t.1$, but not $t.0$. Formally, we say two places are disjoint ($\#$) or conflict ($\sqcap$) when:
\mathdef{disjoint}
\mathdef{notdisjoint}
\begin{align*}
    \disjoint{x_1.q_1}{x_2.q_2} &\eqdef x_1 \neq x_2 \vee (&&(q_1 \text{ is not a prefix of } q_2) ~ \wedge \\ & &&(q_2 \text{ is not a prefix of } q_1)) \\
    \notdisjoint{\pi_1}{\pi_2} &\eqdef \neg(\disjoint{\pi_1}{\pi_2}) 
\end{align*}

Then to update a place's conflicts in $\Theta$, we define the metafunction $\msf{update}\text{-}\msf{conflicts}$ to add $\kappa$ to all conflicting places $p'$. (Note that this rule is actually defined over place \textit{expressions} $p$, which are explained in the next subsection.)
\mathdef{updateconflicts}%
\begin{align*}
    &\updateconflicts{\Theta}{p}{\kappa} \eqdef \\ 
    &\hspace{20pt} \forall p' \mapsto \kappa_{p'} \in \Theta_\msf{cfl}  ~ . ~ p' \mapsto \kappa_{p'} \cup \kappa \\
    &\hspace{20pt} \text{where} ~ \Theta_\msf{cfl} = \{p' \mapsto \kappa_{p'} \in \Theta \mid \notdisjoint{p}{p'}\}
\end{align*}

Finally, the rule for reading places is simply to look up the place's dependencies in $\Theta$:
\vspace{4pt}
\begin{mathpar}
\typerule{T-Move}{
    \gray{\Delta; \Gamma \vdash_\uniq{} \pi \Rightarrow \{~^\uniq{} \pi\} }
    \qquad
    \gray{\Gamma(\pi) = \tyvar{si} }
    \quad
    \gray{\msf{noncopyable}_\Sigma~\tyvar{si} }
  }{
    \tcnew{\Sigma}{\Delta}{\Gamma}{\hlight{\Theta}}{\pi}
    {\tyvar{si} \hlight{\withslice{\Theta(\pi)}}}
    {\Gamma[\pi \mapsto \tau^{\textsc{si}^\dagger}]}
    {\hlight{\Theta}}
  }{tmove}
\end{mathpar}

\subsection{References}
\label{sec:references}

Beyond concrete places in memory, Oxide also contains references that point to places. As in Rust, these references have both a lifetime (called a ``provenance'') and a mutability qualifier (called an ``ownership qualifier''). Their syntax is:
\begin{gather*}
\begin{aligned}
\msf{Concrete~Provenance}~r \hspace{12pt} \msf{Abstract~Provenance}~\varrho 
\end{aligned}
\\
\begin{aligned}
\bnfm{Place~Expression}{p}{x \mid \ast p \mid p.n} \\
\bnfm{Provenance}{\rho}{\varrho \mid r} \\
\bnfm{Ownership~Qualifier}{\omega}{\shrd \mid \uniq{}} \\
\bnfm{Sized~Type}{\tyvar{si}}{\ldots \mid \eref{\omega}{\rho}{\tyvar{xi}}} \\ 
\bnfm{Expression}{e}{\ldots \mid \eref{\omega}{r}{p} \mid p := e \mid \letprov{r}{e}}
\end{aligned}
\end{gather*}

Provenances are created via a $\msfb{letprov}$ expression, and references are created via a borrow expression $\eref{\omega}{r}{p}$ that has an initial concrete provenance $r$ (abstract provenances are just used for types of function parameters). References are used in conjunction with place expressions $p$ that are places whose paths contain dereferences. For example, this program creates, reborrows, and mutates a reference:
\begin{align*}
    &\msfb{letprov} \langle r_1, r_2, r_3, r_4 \rangle \\
    &\msfb{let}~x : (\uty, \uty) = (0, 0); \\
    &\msfb{let}~y : \eref{\uniq}{r_2}{(\uty, \uty)} = \eref{\uniq}{r_1}{x}; \\
    &\msfb{let}~z : \eref{\uniq}{r_4}{\uty} = \eref{\uniq}{r_3}{(\ast y).1}; \\
    &{\ast z} := 1_\loc
\end{align*}
Consider the information flow induced by $\ast z := 1_\loc$. We need to compute all places that $z$ could point-to, in this case $x.1$, so $\loc$ can be added to the conflicts of $x.1$. Essentially, we must perform a \textit{pointer analysis}\,\cite{smaragdakis2015pointer}. 

The key idea is that Oxide already does a pointer analysis! Performing one is an essential task in ensuring ownership-safety. All we have to do is extract the relevant information with Oxide's existing judgments. This is represented by the information flow extension to the reference-mutation rule:
\vspace{4pt}
\begin{mathpar}
\typerule{T-AssignDeref}
  {
    \tcnew{\Sigma}{\Delta}{\Gamma}{\hlight{\Theta}}{e}{\tyvar{si}_n \hlight{\withslice{\kappa}}}{\Gamma_1}{\hlight{\Theta_1}} 
    \\
    \gray{\Delta; \Gamma_1 \vdash_\uniq{} p : \tyvar{si}_o }
    \\
    \Delta; \Gamma_1 \vdash_\uniq{} p \Rightarrow \loanset 
    \\
    \gray{\Delta; \Gamma_1 \vdash \tyvar{si}_n \lesssim \tyvar{si}_o \Rightarrow \Gamma' }
    \\
    \hlight{\Theta_2 = \Theta_1[\forall ~ ^\omega p' \in \loanset ~ . ~ \updateconflicts{\Theta_1}{p'}{\kappa}]}
  }
  {
    \tcnew{\Sigma}{\Delta}{\Gamma}{\hlight{\Theta}}{p := e}{\unitty \hlight{\withslice{\varnothing}}}{\Gamma' \vartriangleright p}{\hlight{\Theta_2}}
  }{tassignderef}
\end{mathpar}

Here, the important concept is Oxide's ownership safety judgment: $\Delta; \Gamma \vdash_\omega p \Rightarrow \loanset$, read as ``in the contexts $\Delta$ and $\Gamma$, $p$ can be used $\omega$-ly and points to a loan in $\loanset$.'' A loan $l ::= {^\omega p}$ is a place expression with an ownership-qualifier. In Oxide, this judgment is used to ensure that a place is used safely at a given level of mutability. For instance, in the example at the top of this column, if $\ast z := 1$ was replaced with $x.1 := 1$, then this would violate ownership-safety because $x$ is already borrowed by $y$ and $z$. 

In the example as written, the ownership-safety judgment for $\ast z$ would compute the loan set:
$$\loanset = \{\, {^\uniq(\ast z)}, {^\uniq (\ast y).1}, {^\uniq x.1}\}$$
Note that $x.1$ is in the loan set of $\ast z$. That suggests the loan set can be used as a pointer analysis. The complete details of computing the loan set can be found in \citet[p.~12]{weiss2019oxide}, but the summary for this example is:
\begin{enumerate}[leftmargin=*]
    \item Checking the borrow expression ``$\eref{\uniq}{r_1}{x}$'' gets the loan set for $x$, which is just $\{{^\uniq x}\}$, and so sets $\Gamma(r_1) = \{{^\uniq x}\}$.
    \item  Checking the assignment ``$y = \eref{\uniq}{r_1}{x}$'' requires that $\eref{\uniq}{r_1}{(\uty, \uty)}$ is a subtype of $\eref{\uniq}{r_2}{(\uty, \uty)}$, which requires that $r_1$ ``outlives'' $r_2$, denoted $r_1 :> r_2$.
    \item The constraint $r_1 :> r_2$ adds $\Gamma(r_1)$ to $\Gamma(r_2)$, so $\Gamma(r_2) = \{{^\uniq x}\}$.
    \item Checking ``$\eref{\uniq}{r_3}{(\ast y).1}$'' gets the loan set for $(\ast y).1$,  which is: $$\{{^\uniq p.1} \mid {^\uniq p} \in \Gamma(r_2)\} \cup \{{^\uniq (\ast y).1}\} = \{{^\uniq x.1}, {^\uniq (\ast y).1}\}$$ That is, the loans for $r_2$ are looked up in $\Gamma$ (to get $\{x\}$), and then the additional projection $\_.1$ is added on-top of each loan (to get $\{x.1\}$).
    \item Then $\Gamma(r_4) = \Gamma(r_3)$ because $r_3 :> r_4$.
    \item Finally, the loan set for $\ast z$ is: 
    $$\Gamma(r_4) \cup \{{^\uniq (\ast z)}\} = \{{^\uniq x.1}, {^\uniq (\ast y).1}, {^\uniq (\ast z)}\}$$
\end{enumerate}

Applying this concept to the \Cref{tr:tassignderef} rule, we compute information flow for reference-mutation as: when mutating $p$ with loans $\loanset$, add $\kappa_e$ to all the conflicts for every loan $^\uniq p' \in \loanset$.

\subsection{Function calls}
\label{sec:funcalls}

Finally, we examine how to modularly compute information flow through function calls, starting with syntax:
\begin{gather*}
\begin{aligned}
\msf{Type~Var}~\alpha \hspace{12pt} \msf{Frame~Var}~\varphi 
\end{aligned}
\\
\begin{aligned}
\bnfm{Expression}{e}{\ldots \mid f\langle\overline{\Phi}, \overline{\rho}, \overline{\tau}\rangle(\pi)} \\ 
\bnfm{Global~Entry}{\varepsilon}{\fndef} \\ 
\bnfm{Global~Env.}{\Sigma}{\bullet \mid \Sigma, \varepsilon}
\end{aligned}
\end{gather*}

Oxide functions are parameterized by frame variables $\varphi$ (for closures), abstract provenances $\varrho$ (for provenance polymorphism), and type variables $\alpha$ (for type polymorphism). Unlike Oxide, we restrict to functions with one argument for simplicity in the formalism. Calling a function $f$ requires an argument $\pi$ and any type-level parameters $\Phi, \rho$ and $\tau$.

The key question is: without inspecting its definition, what is the \textit{most precise} assumption we can make about a function's information flow while still being sound? By ``precise'' we mean ``if the analysis says there is a flow, then the flow actually exists'', and by ``sound'' we mean ``if a flow actually exists, then the analysis says that flow exists.'' For example consider this program:
\begin{align*}
    &\msfb{fn}~\msf{f}\langle \varrho_1, \varrho_2 \rangle(x: (\eref{\uniq}{\varrho_1}{\uty}, \eref{\shrd}{\varrho_2}{\uty})) \{ ~ \textcolor{gray}{\text{???}} ~ \}\\
    &\msfb{let}~x : \uty = 1_{\loc_1}; ~
    \msfb{let}~y : \uty = 2_{\loc_2}; \\
    &\msfb{letprov}\langle r_1, r_2 \rangle ~ \msfb{let}~t : (\eref{\uniq}{r_1}{\uty}, \eref{\shrd}{r_2}{\uty}) \\
    & \hspace{10pt} = ( \eref{\uniq}{r_1}{x}, \eref{\shrd}{r_2}{y}); \\
    &\msf{f}\langle r_1, r_2\rangle(t)
\end{align*}

First, what can $\msf{f}(t)$ mutate? Any data behind a shared reference is immutable, so only $\ast t.0$ could possibly be mutated, not $\ast t.1$. More generally, the argument's \textit{transitive mutable references} must be assumed to be mutated. 

Second, what are the inputs to the mutation of $\ast t.0$? This could theoretically be any possible value in the input, so both $\ast t.0$ and $\ast t.1$. More generally, every \textit{transitively readable place} from the argument must be assumed to be to be an input to the mutation. So in this example, a modular analysis of the information flow from calling $\msf{cp}$ would add $\{\loc_1, \loc_2\}$ to $\Theta(x)$ but not $\Theta(y)$.

To formalize these concepts, we first need to describe the transitive references of a place. The $\refs{\omega}{p, \tau}$ metafunction computes a place expression for every reference accessible from $p$. If $\omega = \uniq$ then this just includes unique references, otherwise it includes unique and shared ones.
\mathdef{refs}%
\begin{align*}
    \refs{\omega}{p, \tyvar{b}} &= \varnothing \\
    \refs{\omega}{p, (\tyvar{si}_1, \ldots, \tyvar{si}_n)} &= 
        \medcup_i \refs{\omega}{p.i, \tyvar{si}_i} \\ 
    \refs{\omega}{p, \eref{\omega'}{\rho}{\tyvar{xi}}} &= \begin{cases}
      \{\ast p\} \cup \refs{\omega}{\ast p, \tyvar{xi}} & \text{if $\omega \lesssim \omega'$} \\
      \varnothing & \text{otherwise}
    \end{cases}
\end{align*}

Here, $\omega \lesssim \omega'$ means ``a loan at $\omega$ can be used as a loan at $\omega'$'', defined as $\uniq \not\lesssim \shrd$ and $\omega \lesssim \omega'$ otherwise. Then $\loans{\omega}{p, \tau, \Delta, \Gamma}$ can be defined as the set of concrete places accessible from those transitive references:
\mathdef{loans}%
\begin{align*}
    &\loans{\omega}{p, \tau, \Delta, \Gamma} \eqdef \\ 
    &\hspace{10pt} \bigcup_{p_1 \in \refs{\omega}{p, \tau}} \{p_2 \mid {^\omega p_2} \in \loanset\} &&\text{where $\Delta; \Gamma \vdash_\omega p_1 \Rightarrow \loanset$}
\end{align*}\

Finally, the function application rule can be revised to include information flow as follows:
\vspace{6pt}
\begin{mathpar}
\typerule{T-App}
  {
    \gray{\overline{\Sigma; \Delta; \Gamma \vdash \Phi} }
    \\
    \gray{\overline{\Delta; \Gamma \vdash \rho} }
    \\
    \gray{\overline{\Sigma; \Delta; \Gamma \vdash \tyvar{si}}}
    \\\\
    \Sigma(f) = \fndef
    \\
    \tcnew{\Sigma}{\Delta}{\Gamma}{\hlight{\Theta}}{\pi}
    {\tyvar{si}_a
        \overline{[\sfrac{\Phi}{\varphi}]}
        \overline{[\sfrac{\rho}{\varrho}]}
        \overline{[\sfrac{\tyvar{si}}{\alpha}]}
     \hlight{\withslice{\kappa}}}
    {\Gamma_1}{\hlight{\Theta}}
    \\
    \gray{\Delta; \Gamma_1 \vdash \overline{\varrho_2\overline{[\sfrac{\rho}{\varrho}]} :> \varrho_1\overline{[\sfrac{\rho}{\varrho}]}} \Rightarrow \Gamma_2}
    \\
    \hlt{\kappa_\arrg = \kappa \cup \medcup_{p \in \loans{\shrd}{\pi, \tyvar{si}_a, \Delta, \Gamma_2}} \Theta(p)}
    \\
    \hlight{
    \begin{minipage}[b]{\linewidth}
    \begin{align*}
    \Theta' = \Theta[&\forall p \in \loans{\uniq}{\pi, \tyvar{si}_a, \Delta, \Gamma_2} ~ . 
    \\
    &\updateconflicts{\Theta}{p}{\kappa_\arrg}] 
    \end{align*}
    \end{minipage}
    }
  }
  {
    \tcnew{\Sigma}{\Delta}{\Gamma}{\hlight{\Theta}}{f\langle\overline{\Phi}, \overline{\rho}, \overline{\tyvar{si}}\rangle(\pi)}
    {\tyvar{si}_r 
        \overline{[\sfrac{\Phi}{\varphi}]}
        \overline{[\sfrac{\rho}{\varrho}]}
        \overline{[\sfrac{\tyvar{si}}{\alpha}]}
     \hlight{\withslice{\kappa_\arrg}}}
    {\Gamma_2}{\hlight{\Theta'}}
  }{tapp}
\end{mathpar}

The collective dependencies of the input $\pi$ are collected into $\kappa_\arrg$, and then every unique reference is updated with $\kappa_\arrg$. Additionally, the function's return value is assumed to be influenced by any input, and so has dependencies $\kappa_\arrg$. 

Note that this rule does not depend on the body $e$ of the function $f$, only its type signature in $\Sigma$. This is the key to the modular approximation. Additionally, it means that this analysis can trivially handle higher-order functions. If $f$ were a parameter to the function being analyzed, then no control-flow analysis is needed to guess its definition.

\section{Soundness}
\label{sec:soundness}

\begin{figure*}
\centering
\begin{minipage}{0.32\linewidth}
\begin{lstlisting}[xleftmargin=15pt]
fn get_count(
  h: &mut HashMap<String, u32>, 
  k: String
) -> u32 {
  if !h.contains_key(&k) {
    h.insert(k, 0); 0
  } else {
    *h.get(&k).unwrap()
  }
}
\end{lstlisting}
\end{minipage}%
\hspace{0.03\linewidth}
\begin{minipage}{0.6\linewidth}
\includegraphics[width=\linewidth]{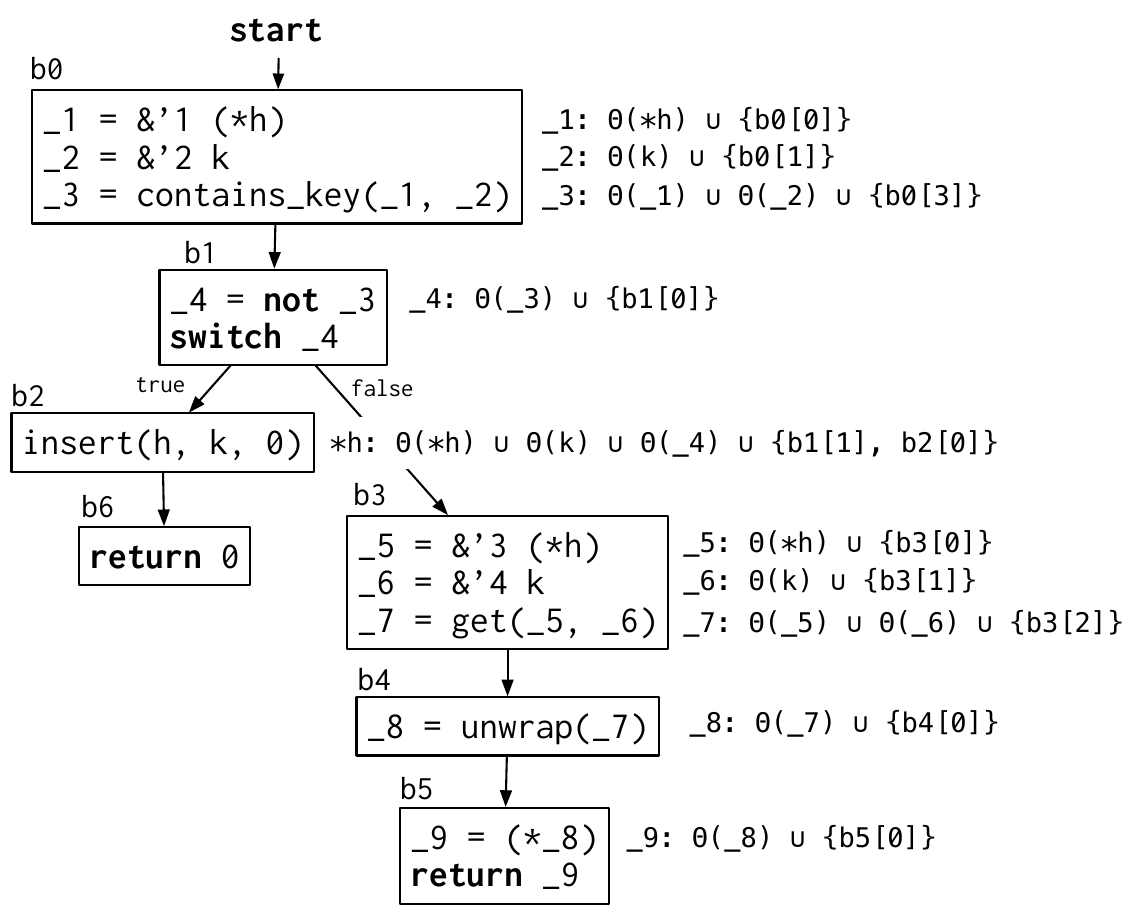}
\end{minipage}%
\cprotect\caption{Example of how \sysname{} computes information flow. On the left is a Rust function \verb|get_count| that finds a value in a hash map for a given key, and inserts 0 if none exists. On the right \verb|get_count| is lowered into Rust's MIR control-flow graph, annotated with information flow. Each rectangle is a basic block, named at the top. Arrows indicate control flow (panics omitted).
Beside each instruction is the result of the information flow analysis, which maps place expressions to locations in the CFG (akin to $\Theta$ in \Cref{sec:analysis}). For example, the \verb|insert| function call adds dependencies to \verb|*h| because it is assumed to be mutated, since it is a mutable reference. Additionally, the \verb|switch| instructions and \verb|_4| variable are added as dependencies to \verb|h| because the call to \verb|insert| is control-dependent on the switch.}
\label{fig:mir_example}
\end{figure*}

To characterize the correctness of our analysis, we seek to prove its \textit{soundness}: if a true information flow exists in a program, then the analysis computes that flow. The standard soundness theorem for information flow systems is \textit{noninterference}\,\cite{goguen1982security}. At a high level, noninterference states that for a given program and its dependencies, and for any two execution contexts, if the dependencies are equal between contexts, then the program will execute with the same observable behavior in both cases. For our analysis, we focus just on values produced by the program, instead of other behaviors like termination or timing.

To formally define noninterference within Oxide, we first need to explore Oxide's operational semantics. Oxide programs are executed in the context of a stack of frames that map variables to values:
\begin{gather*}
\begin{aligned}
\bnfm{Stack}{\sigma}{\bullet \mid \sigma \stacksep \varsigma} \\
\bnfm{Stack~Frame}{\varsigma}{\bullet \mid \varsigma, x \mapsto v} \\
\end{aligned}
\end{gather*}

For example, in the empty stack $\bullet$, the expression ``$\olet{x}{\uty}{1}{x := 2}$'' would first add $x \mapsto 1$ to the stack. Then executing $x := 2$ would update $\sigma(x) = 2$. More generally, we use the shorthand $\sigma(p)$ to mean ``reduce $p$ to a concrete location $\pi$, then look up the value of $\pi$ in $\sigma$.''

The key ingredient for noninterference is the equivalence of dependencies between stacks. That is, for two stacks $\sigma_1$ and $\sigma_2$ and a set of dependencies $\kappa$ in a context $\Theta$, we say those stacks are \textit{the same up to $\kappa$} if all $p$ with $\Theta(p) \subseteq \kappa$ are the same between stacks. Formally, the dependencies of $\kappa$ and equivalence of heaps are defined as:
\mathdef{deps}%
\mathdef{stackeq}%
\mathdef{stackdepeq}%
\begin{align*}
    \deps{\Theta}{\kappa} &\eqdef \{p \mid p \mapsto \kappa_p \in \Theta \wedge \kappa_p \subseteq \kappa\} \\
    \sigma_1 \stackeq{P} \sigma_2 &\eqdef \forall p \in P ~ . ~ \sigma_1(p) = \sigma_2(p) \\
    \sigma_1 \stackdepeq{\Theta}{\kappa} \sigma_2 &\eqdef \sigma_1 \stackeq{\deps{\Theta}{\kappa}} \sigma_2
\end{align*}

Then we define noninterference as follows:

\savetheorem{noninterference}
\begin{theorem}[Noninterference]
\label{thm:noninterference}
Let $e$ such that: 
$$\tcnew{\Sigma}{\bullet}{\Gamma}{\Theta}{e}{\tau \withslice{\kappa}}{\Gamma'}{\Theta'}$$
For $i \in \{1, 2\}$, let $\sigma_i$ such that: 
$$\Sigma \vdash \sigma_i : \Gamma \hspace{12pt} \text{and} \hspace{12pt} \evalsto{\sigma_i}{e}{\stepped{\sigma}_i}{v_i}$$
Then:
\begin{enumerate}[(a)]
    \item $\sigma_1 \stackdepeq{\Theta}{\kappa} \sigma_2 \implies v_1 = v_2$
    \item $\forall p \mapsto \kappa_p \in \Theta' ~ . ~ \sigma_1 \stackdepeq{\Theta}{\kappa_p} \sigma_2 \implies \stepped{\sigma}_1(p) = \stepped{\sigma}_2(p)$
\end{enumerate}
\end{theorem}

This theorem states that given a well-typed expression $e$ and corresponding stacks $\sigma_i$, then its output $v_i$ should be equal if the expression's dependencies $\kappa$ are initially equal. Moreover, for any place expression $p$, if its dependencies in the output context $\Theta'$ are initially equal then the stack value will be the same after execution.

Note that the context $\Delta$ is required to be empty because an expression $e$ can only evaluate if it does not contain abstract type or provenance variables. The judgment $\Sigma \vdash \sigma_i : \Gamma$ means ``the stack $\sigma_i$ is well-typed under $\Sigma$ and $\Gamma$''. That is, for all places $\pi$ in $\Gamma$, then $\pi \in \sigma$ and $\sigma(\pi)$ has type $\Gamma(\pi)$.

The proof of \Cref{thm:noninterference}, found in \Cref{sec:appendix_proofs}, guarantees that we can soundly compute information flow for Oxide programs.

\section{Implementation}
\label{sec:implementation}

\noindent Our formal model provides a sound theoretical basis for analyzing information flow in Oxide. However, Rust is a more complex language than Oxide, and the Rust compiler uses many intermediate representations beyond its surface syntax. Therefore in this section, we describe the key details of how our system, \sysname{}, bridges theory to practice. Specifically: 
\begin{enumerate}[leftmargin=*]
    \item Rust computes lifetime-related information on a control-flow graph (CFG) program representation, not the high-level AST. So we translate our analysis to work for CFGs (\Cref{sec:mir}).
    \item Rust does not compute the loan set for lifetimes directly like in Oxide. So we must reconstruct the loan sets given the information exported by Rust (\Cref{sec:lifetimes}).
    \item Rust contains escape hatches for ownership-unsafe code that cannot be analyzed using our analysis. So we describe the situations in which our analysis is unsound for Rust programs (\Cref{sec:limitations}).
\end{enumerate}

\subsection{Analyzing control-flow graphs}
\label{sec:mir}

The Rust compiler lowers programs into a ``mid-level representation'', or MIR, that represents programs as a control-flow graph. Essentially, expressions are flattened into sequences of instructions (basic blocks) which terminate in instructions that can jump to other blocks, like a branch or function call.  \Cref{fig:mir_example} shows an example CFG and its information flow.

To implement the modular information flow analysis for MIR, we reused standard static analysis techniques for CFGs, i.e., a flow-sensitive, forward dataflow analysis pass where:

\begin{itemize}[leftmargin=*]
    \item At each instruction, we maintain a mapping from place expressions to a set of locations in the CFG on which the place is dependent, comparable to $\Theta$ in \Cref{sec:analysis}. 
    \item A transfer function updates $\Theta$ for each instruction, e.g. $p := e$ follows the same rules as in \Cref{tr:tassignderef} by adding the dependencies of $e$ to all conflicts of aliases of $p$.
    \item The input $\Theta^\msf{in}$ to a basic block is the join of each of the output $\Theta^\msf{out}_i$ for each incoming edge, i.e. $\Theta^\msf{in} = \bigvee_i \Theta^\msf{out}_i$ . The join operation is key-wise set union, or more precisely:
    \[\Theta_1 \vee \Theta_2 \eqdef \{x \mapsto \Theta_1(x) \cup \Theta_2(x) \mid x \in \Theta_1 \vee x \in \Theta_2\}\]
    \item We iterate this analysis to a fixpoint, which we are guaranteed to reach because $\langle\Theta, \vee\rangle$ forms a join-semilattice.
\end{itemize}

To handle indirect information flows via control flow, such as the dependence of \verb|h| on \verb|contains_key| in \Cref{fig:mir_example}, we compute the control-dependence between instructions.  We define control-dependence following \citet{ferrante1987program}:  an instruction $X$ is control-dependent on $Y$ if there exists a directed path $P$ from $X$ to $Y$ such that any $Z$ in $P$ is post-dominated by $Y$, and $X$ is not post-dominated by $Y$. An instruction $X$ is post-dominated by $Y$ if $Y$ is on every path from $X$ to a \Verb|return| node. We compute control-dependencies by generating the post-dominator tree and frontier of the CFG using the algorithms of \citet{cooper2001simple} and \citet{cytron1989efficient}, respectively.

Besides a return, the only other control-flow path out of a function in Rust is a panic. For example, each function call in \Cref{fig:mir_example} actually has an implicit edge to a panic node (not depicted). Unlike exceptions in other languages, panics are designed to indicate unrecoverable failure. Therefore we exclude panics from our control-dependence analysis.

\subsection{Computing loan sets from lifetimes}
\label{sec:lifetimes}

To verify ownership-safety (perform ``borrow-checking''), the Rust compiler does not explicitly build the loan sets of lifetimes (or provenances in Oxide terminology). The borrow checking algorithm performs a sort of flow-sensitive dataflow analysis that determines the range of code during which a lifetime is valid, and then checks for conflicts e.g. in overlapping lifetimes (see the non-lexical lifetimes RFC\,\cite{nllrfc}).

However, Rust's borrow checker relies on the same fundamental language feature as Oxide to verify ownership-safety: outlives-constraints. For a given Rust function, Rust can output the set of outlives-constraints between all lifetimes in the function. These lifetimes are generated in the same manner as in Oxide, such as from inferred subtyping requirements or user-provided outlives-constraints. Then given these constraints, we compute loan sets via a process similar to the ownership-safety judgment described in \Cref{sec:references}. In short, for all instances of borrow expressions $\eref{\omega}{r}{p}$ in the MIR program, we initialize $\Gamma(r) = \{p\}$. Then we propagate loans via $\Gamma(r) = \medcup_{r' :> r} \Gamma(r')$ until $\Gamma$ reaches a fixpoint.

\subsection{Handling ownership-unsafe code}
\label{sec:limitations}


Rust has a concept of \textit{raw pointers} whose behavior is comparable to pointers in C. For a type \verb|T|, an immutable reference has type \verb|&T|, while an immutable raw pointer has type \verb|*const T|. Raw pointers are not subject to ownership restrictions, and they can only be used in blocks of code demarcated as \verb|unsafe|. They are primarily used to interoperate with other languages like C, and to implement primitives that cannot be proved as ownership-safe via Rust's rules.

Our pointer and mutation analysis fundamentally relies on ownership-safety for soundness. We do not try to analyze information flowing directly through unsafe code, as it would be subject to the same difficulties of C++ in \Cref{sec:intro}. While this limits the applicability of our analysis, empirical studies have shown that most Rust code does not (directly) use unsafe blocks\,\cite{astrauskas2020programmers,evans2020rust}. We further discuss the impact and potential mitigations of this limitation in \Cref{sec:discussion}.

\section{Evaluation}
\label{sec:evaluation}

\begin{table*}
\begin{tabular}{l|l|l|r|r|r|r}
    \textbf{Project} & \textbf{Crate} & \textbf{Purpose} & \textbf{LOC} & \textbf{\# Vars} & \textbf{\# Funcs} & \textbf{Avg. Instrs/Func} \\ \hline
    \href{https://github.com/rayon-rs/rayon}{rayon} &  & Data parallelism library & 15,524 & 10,607 & 1,079 & 16.6 \\ \hline
    \href{https://github.com/SergioBenitez/Rocket}{Rocket} & core/lib & Web backend framework & 10,688 & 12,040 & 741 & 25.5 \\ \hline
    \href{https://github.com/ctz/rustls}{rustls} & rustls & TLS implementation & 16,866 & 23,407 & 868 & 42.4 \\ \hline
    \href{https://github.com/mozilla/sccache}{sccache} &  & Distributed build cache & 23,202 & 23,987 & 643 & 62.1 \\ \hline
    \href{https://github.com/dimforge/nalgebra}{nalgebra} &  & Numerics library & 31,951 & 35,886 & 1,785 & 26.7 \\ \hline
    \href{https://github.com/image-rs/image}{image} &  & Image processing library & 20,722 & 39,077 & 1,096 & 56.8 \\ \hline
    \href{https://github.com/hyperium/hyper}{hyper} &  & HTTP server & 15,082 & 44,900 & 790 & 82.9 \\ \hline
    \href{https://github.com/mrDIMAS/rg3d}{rg3d} &  & 3D game engine & 54,426 & 59,590 & 3,448 & 25.7 \\ \hline
    \href{https://github.com/xiph/rav1e}{rav1e} &  & Video encoder & 50,294 & 76,749 & 931 & 115.4 \\ \hline
    \href{https://github.com/RustPython/RustPython}{RustPython} & vm & Python interpreter & 47,927 & 97,637 & 3,315 & 51.0 \\ \hline
    \multicolumn{1}{l}{} & 
    \multicolumn{1}{l}{} &
    \multicolumn{1}{r|}{\textbf{Total:}} &
    286,682 & 435,979 & 14,696
\end{tabular}
\caption{Dataset of crates used to evaluate information flow precision, ordered in increasing number of variables analyzed. Each project often contains many crates, so a sub-crate is specified where applicable, and the root crate is analyzed otherwise. Metrics displayed are LOC (lines of code), number of variables, number of functions, and the average number of MIR instructions per function (size of CFG).}
\label{tab:dataset}
\end{table*}

\Cref{sec:soundness} established that our analysis is \textit{sound}. The next question is whether it is \textit{precise}: how many spurious flows are included by our analysis? 
We evaluate two directions:

\begin{enumerate}[leftmargin=*]
    \item What if the analysis had \textit{more} information? If we could analyze the definitions of called functions, how much more precise are whole-program flows vs. modular flows?
    \item What if the analysis had \textit{less} information? If Rust's type system was more like C++, i.e. lacking ownership, then how much less precise do the modular flows become?
\end{enumerate}

\noindent To answer these questions, we created three modifications to Flowistry:

\begin{itemize}[leftmargin=*]
    \item \wholeprogram{}: the analysis recursively analyzes information flow within the definitions of called functions. For example, if calling a function \lstinline|f(&mut x, y)| where \verb|f| does not actually modify \verb|x|, then the \wholeprogram{} analysis will not register a flow from \verb|y| to \verb|x|.
    \item \mutblind{}: the analysis does not distinguish between mutable and immutable references. For example, if calling a function \verb|f(&x)|, then the analysis assumes that \verb|x| can be modified.
    \item \pointerblind{}: the analysis does not use lifetimes to reason about references, and rather assumes all references of the same type can alias. For example, if a function takes as input \lstinline|f(x: &mut i32, y: &mut i32)| then \verb|x| and \verb|y| are assumed to be aliases.
\end{itemize}

The \wholeprogram{} modification represents the most precise information flow analysis we can feasibly implement. 
The \mutblind{} and \pointerblind{} modifications represent an ablation of the precision provided by ownership types.
Each modification can be combined with the others, representing $2^3 = 8$ possible conditions for evaluation.

To better understand \wholeprogram{}, say we are analyzing the information flow for an expression \lstinline|f(&mut x, y)| where \verb|f| is defined as \lstinline|f(a, b) { (*a).1 = b; }|. After analyzing the definition of \verb|f|, we translate flows to parameters of \verb|f| into flows on arguments of the call to \verb|f|. So the flow \lstinline[mathescape]|b $\rightarrow$ (*a).1| is translated into \lstinline[mathescape]|y $\rightarrow$ x.1|.
\
Additionally, if the definition of \verb|f| is not available, then we fall back to the modular analysis. Importantly, due to the architecture of the Rust compiler, the only available definitions are those \textit{within the package being analyzed}. Therefore even with \wholeprogram{}, we cannot recurse into e.g. the standard library.

With these three modifications, we compare the number of flows computed from a dataset of Rust projects (\Cref{sec:dataset}) to quantitatively (\Cref{sec:quant}) and qualitatively (\Cref{sec:qual}) evaluate the precision of our analysis.

\subsection{Dataset}
\label{sec:dataset}

To empirically compare these modifications, we curated a dataset of Rust packages (or ``crates'') to analyze. We had two selection criteria:
\begin{enumerate}[leftmargin=*]
    \item To mitigate the single-crate limitation of \wholeprogram{}, we preferred large crates so as to see a greater impact from the \wholeprogram{} modification. We only considered crates with over 10,000 lines of code as measured by the \Verb|cloc| utility\,\cite{cloc}.
    \item To control for code styles specific to individual applications, we wanted crates from a wide range of domains.
\end{enumerate}

After a manual review of large crates in the Rust ecosystem, we selected 10 crates, shown in \Cref{tab:dataset}. We built each crate with as many feature flags enabled as would work on our Ubuntu 16.04 machine. Details like the specific flags and commit hashes can be found in \Cref{sec:eval_details}.

For each crate, we ran the information flow analysis on every function in the crate, repeated under each of the 8 conditions. Within a function, for each local variable $x$, we compute the size of $\Theta(x)$ at the exit of the CFG --- in terms of program slicing, we compute the size of the variable's backward slice at the function's return instructions. The resulting dataset then has four independent variables (crate, function, condition, variable name) and one dependent variable (size of dependency set) for a total of 3,487,832 data points.

Our main goal in this evaluation is to analyze precision, not performance. Our baseline implementation is reasonably optimized --- the median per-function execution time was $370.24\mu\text{s}$. But \wholeprogram{} is designed to be as precise as possible, so its naive recursion is sometimes extremely slow. For example, when analyzing the \verb|GameEngine::render| function of the \verb|rg3d| crate (with thousands of functions in its call graph), the modular analysis takes 0.13s while the recursive analysis takes 23.18s, a $178\times$ slowdown. Future work could compare our modular analysis to whole-program analyses across the precision/performance spectrum, such as in the extensive literature on context-sensitivity\,\cite{smaragdakis2015pointer}. 

\subsection{Quantitative results}
\label{sec:quant}

\begin{figure}[t]
\includegraphics[width=\linewidth]{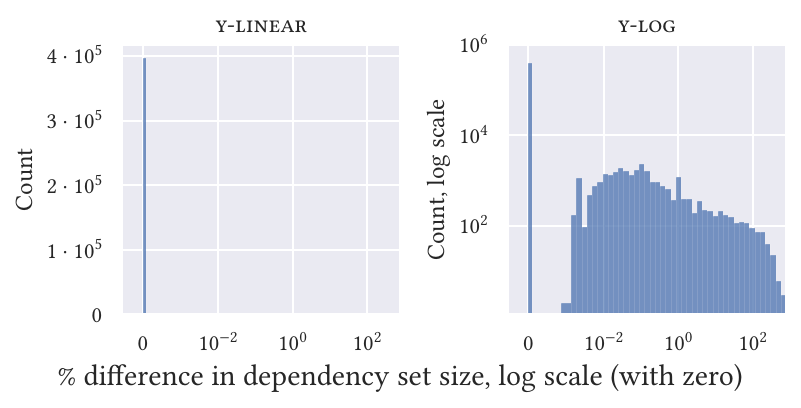}
\caption{Distribution in differences of dependency set size between \wholeprogram{} and \baseline{} analyses. The x-axis is a log-scale with 0 added for comparison. Most sets are the same, so 0 dominates (left). A log-scale (right) shows the tail more clearly.}
\label{fig:recurse}
\end{figure}

We observed no meaningful patterns from the interaction of modifications --- for example, in a linear regression of the interaction of \mutblind{} and \pointerblind{} against the size of the dependency set, each condition is individually statistically significant ($p < 0.001$) while their interaction is not ($p = 0.337$). So to simplify our presentation, we focus only on four conditions: three for each modification individually active with the rest disabled, and one for all modifications disabled, referred to as \baseline{}. 

\subsubsection{\wholeprogram{}}
\label{sec:whole}

For \wholeprogram{}, we compare against \baseline{} to answer our first evaluation question: how much more precise is a whole-program analysis than a modular one? To quantify precision, we compare the \textit{percentage increase in size} of dependency sets for a given variable between two conditions. For instance, if \wholeprogram{} computes $|\Theta(x)| = 2$ and $\baseline{}$ computes $|\Theta(x)| = 5$ for some $x$, then the difference is $(5 - 2) / 2 = 1.5 = 150\%$. 

\Cref{fig:recurse} shows a histogram of the differences between \wholeprogram{} and \baseline{} for all variables. In 94\% of all cases, the \wholeprogram{} and \baseline{} conditions produce the same result and hence have a difference of 0. In the remaining 6\% of cases with a non-zero difference, visually enhanced with a log-scale in \Cref{fig:recurse}-right, the metric follows a right-tailed log-normal distribution. We can summarize the log-normal by computing its median, which is 7\%. This means that within the 6\% of non-zero cases, the median difference is an increase in size by 7\%. Thus, the modular approximation does not significantly increase the size of dependency sets in the vast majority of cases.

\subsubsection{\mutblind{} and \pointerblind{}}
\label{sec:blind}

\begin{figure}[t]
\includegraphics[width=\linewidth]{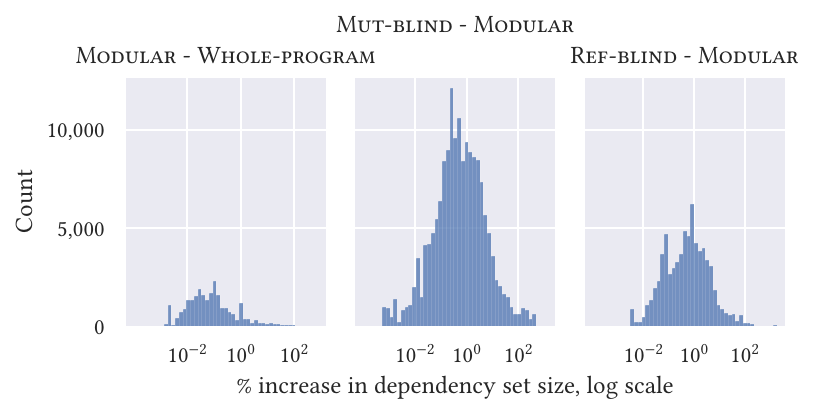}
\caption{Distribution in differences between \baseline{} and each alternative condition, with zeros excluded to highlight the shape of each distribution. \mutblind{} and \pointerblind{} both reduce the precision more often and more severely than \baseline{} does vs. \wholeprogram{}.}
\label{fig:all_dist}
\end{figure}

Next, we address our second evaluation question: how much less precise is an analysis with weaker assumptions about the program than the \baseline{} analysis? For this question, we compare the size of dependency sets between the \mutblind{} and \pointerblind{} conditions versus \baseline{}. \Cref{fig:all_dist} shows the corresponding histograms of differences, with the \wholeprogram{} vs. \baseline{} histogram included for comparison.

First, the \mutblind{} and \pointerblind{} modifications reduce the precision of the analysis more often and with a greater magnitude than \baseline{}{} does vs. \wholeprogram{}. 39\% of \mutblind{} cases and 17\% of \pointerblind{} cases have a non-zero difference. Of those cases, the median difference in size is 50\% for \mutblind{} and 56\% for \pointerblind{}.

Therefore, the information from ownership types is valuable in increasing the precision of our information flow analysis. Dependency sets are often larger without access to information about mutability or lifetimes.

\subsection{Qualitative results}
\label{sec:qual}

\begin{figure*}
\includegraphics[width=0.9\linewidth]{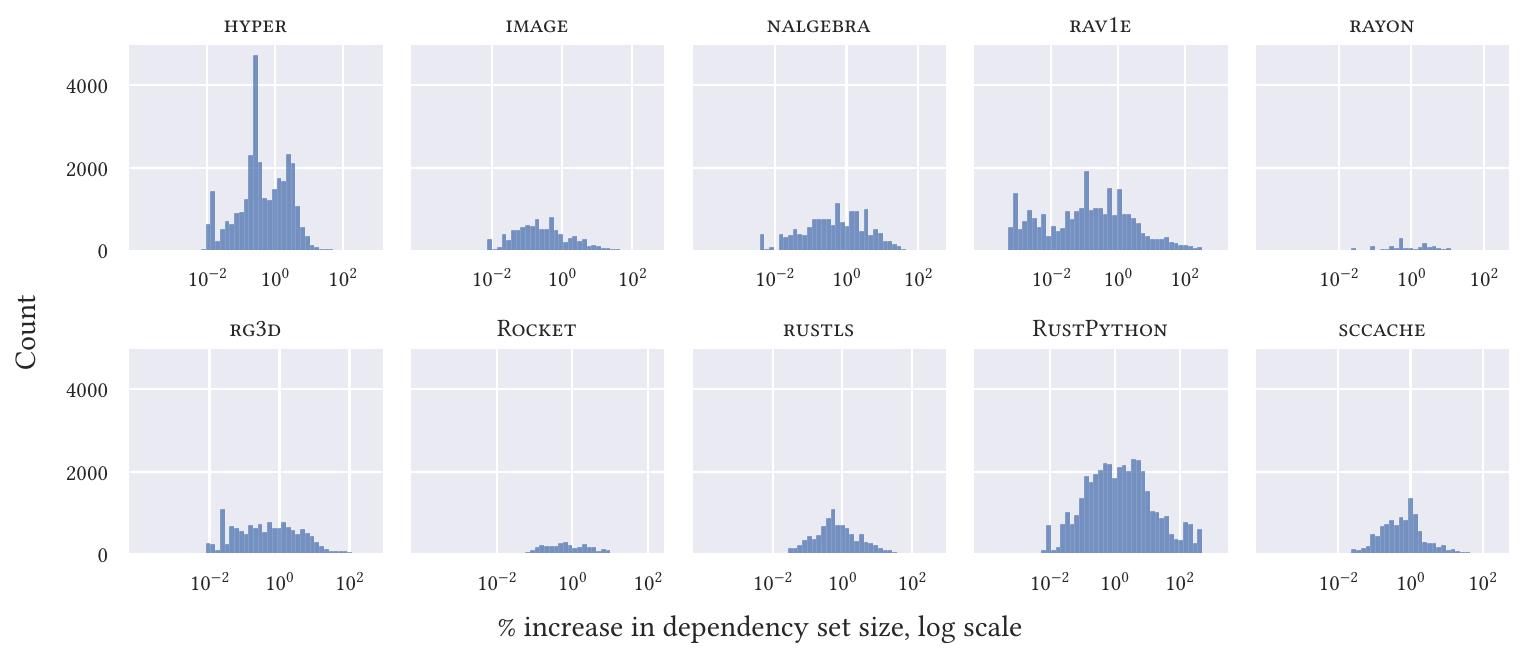}
\caption{Distribution of non-zero differences between \baseline{} and \mutblind{}, broken down by crate. }
\label{fig:crates}
\end{figure*}

The statistics convey a sense of how often each condition influences precision. But it is equally valuable to understand the kind of code that leads to such differences. For each condition, we manually inspected a sample of cases with non-zero differences vs. \baseline{}.

\subsubsection{Modularity.}\label{sec:whole-vs-mod}

One common source of imprecision in modular flows is when functions take a mutable reference as input for the purposes of passing the mutable permission off to an element of the input. 

\begin{lstlisting}
fn crop<I: GenericImageView>(
  image: &mut I, x: u32, y: u32, 
  width: u32, height: u32
) -> SubImage<&mut I> {
  let (x, y, width, height) = 
    crop_dimms(image, x, y, width, height);
  SubImage::new(image, x, y, width, height)
}
\end{lstlisting}

For example, the function \Verb|crop| from the \Verb|image| crate returns a mutable view on an image. No data is mutated, only the mutable permission is passed from whole image to sub-image. However, a modular analysis on the \Verb|image| input would assume that \Verb|image| is mutated by \Verb|crop|.

Another common case is when a value only depends on a subset of a function's inputs. The modular approximation assumes all inputs are relevant to all possible mutations, but this is naturally not always the case.

\begin{lstlisting}
fn solve_lower_triangular_with_diag_mut<R2,C2,S2>(
  &self, b: &mut Matrix<N, R2, C2, S2>, diag: N,
) -> bool {
  if diag.is_zero() { return false; }
  // logic mutating b...
  true
}
\end{lstlisting}

For example, this function from \Verb|nalgebra| returns a boolean whose value solely depends on the argument \Verb|diag|. However, a modular analysis of a call to this function would assume that \lstinline|self| and \verb|b| is relevant to the return value as well.

\subsubsection{Mutability}\label{sec:mut}

The reason \mutblind{} is less precise than \baseline{} is quite simple --- many functions take immutable references as inputs, and so many more mutations have to be
assumed.

\begin{lstlisting}
fn read_until<R, F>(io: &mut R, func: F)
  -> io::Result<Vec<u8>> 
  where R: Read, F: Fn(&[u8]) -> bool
{
  let mut buf = vec![0; 8192]; let mut pos = 0;
  loop {
    let n = io.read(&mut buf[pos..])?; pos += n;
    if func(&buf[..pos]) { break; } 
    // ...
  }
}
\end{lstlisting}

For instance, this function from \Verb|hyper| repeatedly calls an input function \rust|func| with segments of an input buffer. Without a control-flow analysis, it is impossible to know what functions \verb|read_until| will be called with. And so \mutblind{} must always assume that \rust|func| could mutate \rust|buf|. However, \baseline{} can rely on the immutability of shared references and deduce that \rust|func| could not mutate \rust|buf|.

\subsubsection{Lifetimes}\label{sec:life}

Without lifetimes, our analysis has to make more conservative assumptions about objects that could possibly alias. We observed many cases in the \pointerblind{} condition where two references shared different lifetimes but the same type, and so had to be classified as aliases.

\begin{lstlisting}
fn link_child_with_parent_component(
  parent: &mut FbxComponent,
  child: &mut FbxComponent,
  child_handle: Handle<FbxComponent>,
) { match parent {
  FbxComponent::Model(model) => {
    model.geoms.push(child_handle),
  },
  // ..
}}
\end{lstlisting}

For example, the \verb|link_child_with_parent_component| function in \verb|rg3d| takes mutable references to a \verb|parent| and \verb|child|. These references are guaranteed not to alias by the rules of ownership, but a naive pointer analysis must assume they could, so modifying \verb|parent| could modify \verb|child|.

\newcommand{\imgframe}[1]{\fcolorbox{gray}{white}{#1}}

\subsection{Threats to validity}
\begin{figure*}
\begin{subfigure}{0.48\linewidth}
\imgframe{\includegraphics[width=0.9\linewidth]{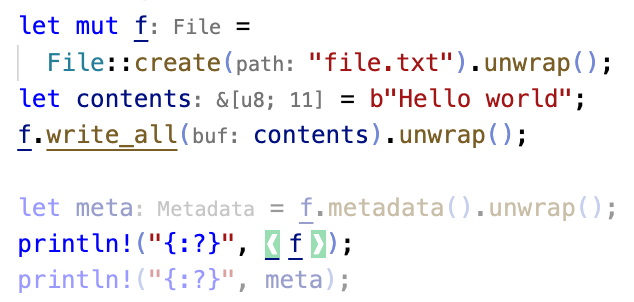}}
\ \\ \\
\imgframe{\includegraphics[width=0.9\linewidth]{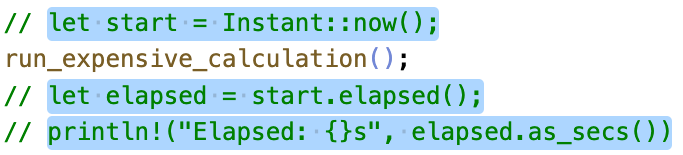}}
\cprotect\caption{A program slicer integrated into VSCode. Above, the user selects a slicing criterion like the variable \verb|f|. Then the slicer highlights the criterion in green, and fades out lines that are not part of the backward slice on \verb|f|. For example, \verb|write_all| mutates the file so it is in the slice, while \verb|metadata| reads the file so it is not in the slice.

Below, the user can manipulate aspects of a program such as commenting out code related to timing. The user computes a forward slice on \verb|start|, adds this slice to their selection (in blue), then tells the IDE to comments out all lines in the selection.}
\label{fig:slicer}
\end{subfigure}
\hspace{0.01\linewidth}
\begin{subfigure}{0.48\linewidth}
\imgframe{\includegraphics[width=\linewidth]{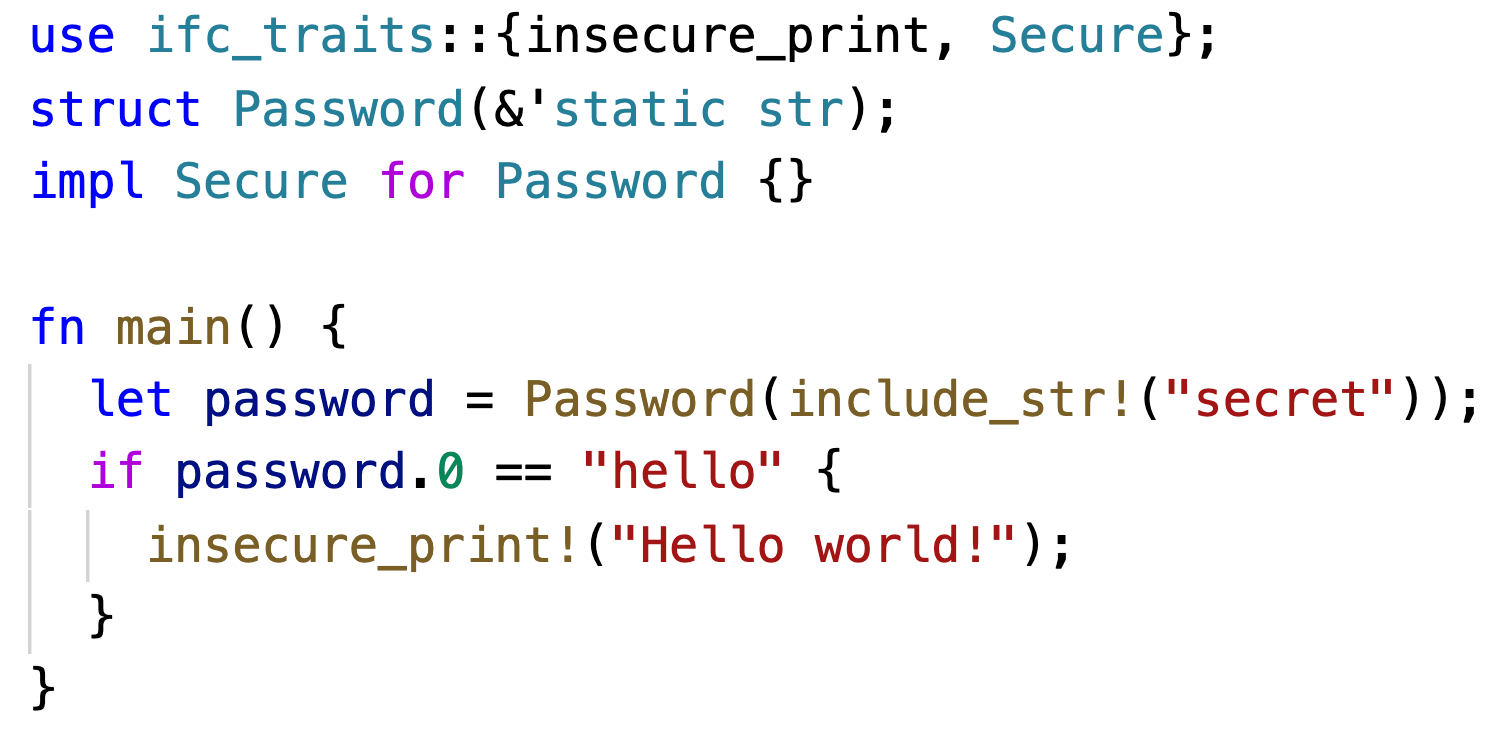}}
\ \\
\imgframe{\includegraphics[width=\linewidth]{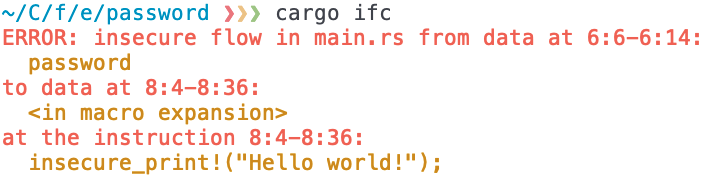}}
\cprotect\caption{An IFC checker. Above, the \verb|ifc_traits| library exports a \verb|Secure| for users to mark sensitive data, like \verb|Password|, and insecure operations like \verb|insecure_print|. Below, a compiler plugin invoked on the program checks for information flows from data with a type implementing \verb|Secure| to insecure operations. Here \verb|insecure_print| is conditionally executed based on a read from \verb|PASSWORD|, so this flow is flagged.}
\label{fig:ifc}
\end{subfigure}
\caption{Two applications of information flow built using \sysname{}.}
\label{fig:apps}
\end{figure*}

Finally, we address the issue: how meaningful are the results above? How likely would they generalize to arbitrary code rather than just our selected dataset? We discuss a few threats to validity below.

\subsubsection{Are the results due to only a few crates?}

If differences between techniques only arose in a small number of situations that happen to be in our dataset, then our technique would not be as generally applicable. To determine the variation between crates, we generated a histogram of non-zero differences for the \baseline{} vs. \mutblind{} comparison, broken down by crate in \Cref{fig:crates}. 

As expected, the larger code bases (e.g. rav1e and RustPython) have more non-zero differences than smaller codebases --- in general the correlation between non-zero differences and total number of variables analyzed is strong, $R^2 = 0.79$. However variation also exists for crates with roughly the same number of variables like \verb|image| and \verb|hyper|. \mutblind{} reduces precision for variables in \verb|hyper| more often than \verb|image|. A qualitative inspection of the respective codebases suggests this may be because \verb|hyper| simply makes greater use of immutable references in its API.

These findings suggest that the impact of ownership types and the modular approximation likely do vary with code style, but a broader trend is still observable across all code.

\subsubsection{Would \wholeprogram{} be more precise with access to dependencies?}

A limitation of our whole-program analysis is our inability to access function definitions outside the current crate. Without this limitation, it may be that the \baseline{} analysis would be significantly worse than \wholeprogram{}. So for each variable analyzed by \wholeprogram{}, we additionally computed whether the information flow for that variable involved a function call across a crate boundary.

Overall 96\% of cases reached at least one crate boundary, suggesting that this limitation does occur quite often in practice. However, the impact of the limitation is less clear. Of the 96\% of cases that hit a crate boundary, 6.6\% had a non-zero difference between \baseline{} and \wholeprogram{}. Of the 4\% that did not hit a crate boundary, 0.6\% had a non-zero difference. One would expect that \wholeprogram{} would be the most precise when the whole program is available (no boundary), but instead it was much closer to \baseline{}. 

Ultimately it is not clear how much more precise \wholeprogram{} would be given access to all a crate's dependencies, but it would not necessarily be a significant improvement over the benchmark presented.

\subsubsection{Is ownership actually important for precision?}

The finding that \pointerblind{} only makes a difference in 17\% of cases may seem surprisingly small. For instance, \citet{shapiro1997effects} found in a empirical study of slices on C programs that ``using a pointer analysis with an average points-to set size twice as large as a more precise pointer analysis led to an increase of about 70\% in the size of [slices].''

A  limitation of our ablation is that the analyzed programs were written to satisfy Rust's ownership safety rules. Disabling lifetimes does not change the structure of the programs to become more C-like --- Rust generally encourages a code style with fewer aliases to avoid dealing with lifetimes. A fairer comparison would be to implement an application idiomatically in both Rust and Rust-but-without-feature-X, but such an evaluation is not practical. It is therefore likely that our results understate the true impact of ownership types on precision given this limitation.

\section{Applications}
\label{sec:applications}

We have demonstrated that ownership can be leveraged to build an information flow analysis that is static, modular, sound, and precise. Our hope is that this analysis can serve as a building block for future static analyses. To bootstrap this future, we have used \sysname{} to implement prototypes of a program slicer and an IFC checker, shown in \Cref{fig:apps}. 

The program slicer in \Cref{fig:slicer} is a VSCode extension that fades out all lines of code that are not relevant to the user's selection, i.e. not part of the modular slice. Rather than present a slice of the entire program like in prior slicing tools, we can use Flowistry's modular analysis to present lightweight slices of just within a given function. Users can apply the slicer for comprehension tasks such as reducing the scope of a bug, or for refactoring tasks such as removing an aspect of a program like timing or logging. 

The IFC checker in \Cref{fig:ifc} is a Rust library and compiler plugin. It provides the user a library with the traits \verb|Secure| and \verb|Insecure| to indicate the relative security of data types and operations. Then the compiler plugin uses \sysname{} to determine whether information flows from \verb|Insecure| variables to \verb|Secure| variables. Users can apply the IFC checker to catch sensitive data leaks in an application. This prototype is purely intraprocedural, but future work could build an interprocedural analysis by using Flowistry's output as procedure summaries in a larger information flow graph.

\section{Related Work}\label{sec:rw}

Our work draws on three core concepts: information flow, modular static analysis, and ownership types.

\paragraph{Information flow}

Information flow has been historically studied in the context of security, such as ensuring low-security users of a program cannot infer anything about its high-security internals. Security-focused information flow analyses have been developed for Java\,\cite{myers1999jflow}, Javascript\,\cite{austin2009efficient}, OCaml\,\cite{pottier2003information}, Haskell\,\cite{stefan2011flexible}, and many other languages. 

Each analysis satisfies some, but not all, of our requirements from \Cref{sec:intro}. For instance, the JFlow\,\cite{myers1999jflow} and Flow-Caml\,\cite{pottier2003information} languages required adding features to the base language, violating our second requirement. Some methods like that of \citet{austin2009efficient} for Javascript rely on dynamic analysis, violating our third requirement. And Haskell only supports effects like mutation through monads, violating our first requirement.

Nonetheless, we draw significant inspiration from mechanisms in prior work. Our analysis resembles the slicing calculus of \citet{abadi1999core}. The use of lifetimes for modular analysis of functions is comparable to security annotations in Flow-Caml\,\cite{pottier2003information}. The CFG analysis draws on techniques used in program slicers, such as the LLVM dg slicer\,\cite{llvmslicer}. 

\paragraph{Modular static analysis}

The key technique to making static analysis modular (or ``compositional'' or ``separate'') is symbolically summarizing a function's behavior, so that the summary can be used without the function's implementation. Starting from \citet{rountev1999data} and \citet{cousot2002modular}, one approach has been to design a system of ``procedure summaries'' understood by the static analyzer and distinct from the language being analyzed. This approach has been widely applied for static analysis of null pointer dereferences\,\cite{yorsh2008generating}, pointer aliases\,\cite{dillig2011precise}, data dependencies\,\cite{tang2015summary}, and other properties.

Another approach, like ours, is to leverage the language's type system to summarize behavior. \citet{tang1994separate} showed that an effect system could be used for a modular control-flow analysis. Later work in Haskell used its powerful type system and monadic effects to embed many forms of information flow control into the language\,\cite{li2006encoding, russo2008library,stefan2011flexible,buiras2015hlio}.

\paragraph{Ownership types}

Rust and Oxide's conceptions of ownership derive from \citet{clarke1998ownership} and \citet{grossman2002region}. For instance, the Cyclone language of Grossman et al. uses regions to restrict where a pointer can point-to, and uses region variables to express relationships between regions in a function's input and output types. A lifetime is similar in that it annotates the types of pointers, but differs in how it is analyzed.

Recent works have demonstrated innovative applications of Rust's type system for modular program analysis. \citet{astrauskas2019leveraging} embed Rust programs into a separation logic to verify pre/post conditions about functions. \citet{jung2020stacked} use Rust's ownership-based guarantees to implement more aggressive program optimizations.

Closer to our domain, \citet{balasubramanian2017system} implemented a prototype IFC system for Rust by lowering programs to LLVM and verifying them with SMACK\,\cite{rakamaric2014smack}, although their system is hard to contrast with ours given the high-level description in their paper. \citet{njor2021static} implemented a static taint analysis for Rust, although it is not field-sensitive, alias-sensitive, or modular.

\section{Discussion}
\label{sec:discussion}

Looking forward, two interesting avenues for future work on \sysname{} are improving its precision and addressing soundness in unsafe code. For instance, the lifetime-based pointer analysis is sound but imprecise in some respects. Lifetimes often lose information about part-whole relationships. Consider the function that returns a mutable pointer to a specific index in a vector:
\begin{lstlisting}
fn get_mut<'a>(&'a mut self, i: usize) 
  -> Option<&'a mut T>;
\end{lstlisting}
These lifetimes indicate only that the return value points to \textit{something} in the input vector. The expressions \verb|v.get_mut(i)| and \verb|v.get_mut(i + 1)| are considered aliases even though they are not. Future work could integrate \sysname{} with verification tools like Prusti\,\cite{astrauskas2019leveraging} to use abstract interpretation for a more precise pointer analysis in such cases.

Additionally, Rust has many libraries built on unsafe code that can lose annotations essential to information flow, such as interior mutability. For example, shared-memory concurrency in Rust looks like this:
\begin{lstlisting}
let n: Arc<Mutex<i32>> = Arc::new(Mutex::new(0));
let n2: Arc<Mutex<i32>> = Arc::clone(&n);
*n2.lock().unwrap() = 1;
\end{lstlisting}
\verb|Arc::clone| does not share a lifetime between its input and output, so a lifetime-based pointer analysis therefore cannot deduce that \verb|n2| is an alias of \verb|n|, and \sysname{} would not recognize that the mutation on line 3 affects \verb|n|. Future work can explore how unsafe libraries could be annotated with the necessary metadata needed to analyze information flow, similar to how RustBelt\,\cite{jung2017rustbelt} identifies the pre/post-conditions needed to ensure type safety within unsafe code.

Overall, we are excited by the possibilities created by having a practical information flow analysis that can run today on any Rust program. Many exciting systems for tasks like debugging\,\cite{ko2004designing}, example generation\,\cite{head2018interactive}, and program repair\,\cite{wen2018context} rely on information flow in some form, and we hope that \sysname{} can support the development of these tools.

\begin{acks}
This work was partially supported by the Italian Ministry of Education through funding for the Rita Levi Montalcini grant (call of 2019).
\end{acks}

\appendix
\section{Appendix}
\label{sec:appendix}

This appendix contains three major sections. First, in \Cref{sec:more_rules} we include information flow inference rules beyond those presented in \Cref{sec:analysis} for the remaining features of Oxide. Second, in \Cref{sec:appendix_proofs} we provide a complete proof of the noninterference theorem in \Cref{sec:soundness}, along with associated lemmas. Finally, in \Cref{sec:eval_details} we provide additional details necessary to replicate the evaluation.

\subsection{Additional rules}
\label{sec:more_rules}

\begin{figure*}
\begin{mathpar}
\typerule{T-Tuple}
  {\forall i ~ . ~ 
    \tcnew{\Sigma}{\Delta}{\Gamma_{i-1}}{\hlight{\Theta_{i-1}}}
     {\hat{e}_i}{\tyvar{si}_i \hlight{\withslice{\kappa_i}}}{\Gamma_i}{\hlight{\Theta_i}}
    \\\\
    \hlight{\kappa = \{\loc\} \cup \medcup_i \kappa_i}
  }
  {\tcnew{\Sigma}{\Delta}{\Gamma_0}{\hlight{\Theta_0}}
   {(\hat{e}_1, \ldots, \hat{e}_n)_{\hlight{\loc}}}{(\tyvar{si}_1, \ldots, \tyvar{si}_n) \hlight{\withslice{\kappa}}}
   {\Gamma_n}{\hlight{\Theta_n}}}
  {ttuple}

\typerule{T-Seq}
  {
    \tcnew{\Sigma}{\Delta}{\Gamma}{\hlight{\Theta}}{e_1}{\tyvar{si}_1 \hlight{\withslice{\_}}}{\Gamma_1}{\hlight{\Theta_1}} 
    \\\\
    \tcnew{\Sigma}{\Delta}{\text{gc-loans}(\Gamma_1)}{\hlight{\Theta_1}}{e_2}{\tyvar{si}_2 \hlight{\withslice{\kappa_2}}}{\Gamma_2}{\hlight{\Theta_2}}
  }
  {
    \tcnew{\Sigma}{\Delta}{\Gamma}{\hlight{\Theta}}{e_1; e_2}{\tyvar{si}_2 \hlight{\withslice{\kappa_2}}}{\Gamma}{\hlight{\Theta_2}}
  }{tseq}

\typerule{T-LetProv}
  {\tcnew{\Sigma}{\Delta}{\Gamma, r \mapsto \varnothing}{\hlight{\Theta}}
   {e}{\tyvar{si} \hlight{\withslice{\kappa}}}{\Gamma', r \mapsto \loanset}{\hlight{\Theta'}} 
   \\
  }
  {\tcnew
    {\Sigma}{\Delta}{\Gamma}{\hlight{\Theta}}
    {\letprov{r}{e}}
    {\tyvar{si} \hlight{\withslice{\kappa}}}
    {\Gamma'}
    {\hlight{\Theta'}}}
  {tletprov}  
  
\typerule{T-Borrow}
  {\gray{\Gamma(r) = \varnothing}
   \\
   \Delta; \Gamma \vdash_\omega p \Rightarrow \loanset 
   \\\\
   \gray{\Delta; \Gamma \vdash_\omega p : \tyvar{xi}}
   \\
   \hlight{\kappa = \{\loc\} \cup \medcup_{^\omega p' \in \loanset} \Theta(p')}
   }
  {\tcnew
    {\Sigma}{\Delta}{\Gamma}{\hlight{\Theta}}
    {(\eref{\omega}{r}{p})_{\hlight{\loc}}}
    {\eref{\omega}{r}{\tyvar{xi}} \hlight{\withslice{\kappa}}}
    {\Gamma[r \mapsto \loanset]}
    {\hlight{\Theta}}}
  {tborrow}
  
\typerule{T-Copy}
  {
    \Delta; \Gamma \vdash_\shrd p \Rightarrow \loanset 
    \\
    \gray{\Delta; \Gamma \vdash_\shrd p : \tyvar{si}}
    \\\\
    \gray{\msf{copyable}_\Sigma \tyvar{si}}
    \\
    \hlight{\kappa = \medcup_{^\omega p' \in \loanset} \Theta(p')}
  }
  {
    \tcnew{\Sigma}{\Delta}{\Gamma}{\hlight{\Theta}}{p}{\tyvar{si} \hlight{\withslice{\kappa}}}{\Gamma}{\hlight{\Theta}}
  }{tcopy}
   
\typerule{T-Branch}
  {
    \tcnew{\Sigma}{\Delta}{\Gamma}{\hlight{\Theta}}{e_1}{\msf{bool} \hlight{\withslice{\kappa_1}}}{\Gamma_1}{\hlight{\Theta_1}} 
    \\
    \tcnew{\Sigma}{\Delta}{\Gamma_1}{\hlight{\Theta_1}}{e_2}{\tyvar{si}_2 \hlight{\withslice{\kappa_2}}}{\Gamma_2}{\hlight{\Theta_2}} 
    \\\\
    \tcnew{\Sigma}{\Delta}{\Gamma_1}{\hlight{\Theta_1}}{e_3}{\tyvar{si}_3 \hlight{\withslice{\kappa_3}}}{\Gamma_3}{\hlight{\Theta_3}} 
    \\
    \gray{\tyvar{si} = \tyvar{si}_2 \vee \tyvar{si} = \tyvar{si}_3}
    \\\\
    \gray{\Delta; \Gamma_2 \vdash \tyvar{si}_2 \lesssim \tyvar{si} \Rightarrow \Gamma_2'}
    \\
    \gray{\Delta; \Gamma_3 \vdash \tyvar{si}_3 \lesssim \tyvar{si} \Rightarrow \Gamma_3'}
    \\\\
    \gray{\Gamma_2' \doublecup \Gamma_3' = \Gamma' }
    \\
    \hlight{\Theta_2 \doublecup \Theta_3 = \Theta'} 
    \\ 
    \hlight{\Theta'' = \Theta'[\forall p \in \Theta' \setminus \Theta_1 ~ . ~ p \mapsto \Theta'(p) \cup \kappa_1]} 
  }
  {
    \tcnew{\Sigma}{\Delta}{\Gamma}{\hlight{\Theta}}
    {\msf{if}~e_1~\{~e_2~\}~\msf{else}~\{~e_3~\}}
    {\tyvar{si} \hlight{\withslice{\kappa_1 \cup \kappa_2 \cup \kappa_3}}}
    {\Gamma'}{\hlight{\Theta''}}
  }{tbranch}
\end{mathpar}
\caption{Additional information flow inference rules.}
\label{fig:additional_rules}
\end{figure*}

In \Cref{fig:additional_rules}, we provide the remaining information flow rules for the expression forms not covered in \Cref{sec:analysis}. \Cref{tr:ttuple}, \Cref{tr:tseq}, \Cref{tr:tletprov} are relatively straightforward, so we focus on the remaining three rules.

First, \Cref{tr:tborrow}: a borrow is unique in that its runtime value is a \textit{place}, i.e. $\msf{ptr}~\pi$. A borrow $\&\pi$ will always have value $\msf{ptr}~\pi$, so it has no dependencies. But the borrow is a reborrow, i.e. of the form $\&{\ast p}$, then we need to include the dependencies of all places that $p$ could point-to. Therefore we include $\Theta(p')$ for each loan $p'$ in the loan set of $p$.

Next, \Cref{tr:tcopy} shows a read from an arbitrary place expression $p$. Unlike \Cref{tr:tmove}, we have to account for $p$ referring to many possible memory locations. Again this is captured by the ownership-safety judgment $\vdash_\shrd p \Rightarrow \loanset$. Therefore the dependencies of $p$ are the dependencies of any possibly read place, i.e. $\medcup_{\pi'} \Theta(\pi')$.

Finally, \Cref{tr:tbranch} shows how to handle conditional execution. The $\kappa$ is simple, as the value of an if-expression could depend on either branch, $\kappa_2 \cup \kappa_3$, along with the condition, $\kappa_1$. The more complex aspect is handling effects in $\Theta$. The core idea is that if a place $p$ could be mutated in either $e_2$ or $e_3$, then that place has a control dependency on $e_1$, and $\kappa_1$ should be part of the dependencies of $p$.

To encode this idea, we introduce two new metafunctions. First, $\Theta_1 \doublecup \Theta_3$ represents the union distributed over like entries:
\[
    \Theta_2 \doublecup \Theta_3 \eqdef \{p \mapsto \Theta_2(p) \cup \Theta_3(p) \mid p \in \Theta_2 \vee p \in \Theta_3\}
\]
Next, we represent ``could be mutated in $e_2$ or $e_3$'' via $\Theta' \setminus \Theta_1$, similarly distributed over like entries:
\begin{align*}
    \Theta' \setminus \Theta_1 \eqdef \{&p \mapsto \Theta'(p) \setminus \Theta_1(p) \\ 
    &\mid (p \in \Theta' \vee p \in \Theta_1) \wedge \Theta'(p) \setminus \Theta_1(p) \neq \varnothing\}
\end{align*}
Then the rule says: after independently computing the contexts $\Theta_2$ and $\Theta_3$, compute a unioned context $\Theta'$. Then for all places $p$ that had new dependencies generated in $e_2$ or $e_3$, i.e. $p \in \Theta' \setminus \Theta_1$, add $\kappa_1$ to the dependencies of $p$.

\subsection{Proofs}
\label{sec:appendix_proofs}

The proof of non-interference relies on a few key lemmas about the semantics of Oxide. We start by defining and proving these lemmas (\Cref{sec:appendix_lemmas}), and then proceed to prove noninterference (\Cref{sec:appendix_noninterference}).

\subsubsection{Lemmas}
\label{sec:appendix_lemmas}

\begin{lemma}[Mutating a place also mutates its conflicts]
\label{lem:places}
Let:
\begin{itemize}[leftmargin=*]
    \item $\pi_\mut = \pi_\mut^\square[x]$, $\sigma$ where $\sigma \vdash \pi_\mut^\square \times x \Downarrow \valuectx$
    \item $v$ be a value and $\stepped{\sigma} = \sigma[x \mapsto \valuectx[v]]$
    \item $\pi_\any \in \sigma$
\end{itemize}
Then $\sigma(\pi_\any) \neq \stepped{\sigma}(\pi_\any) \implies \notdisjoint{\pi_\mut}{\pi_\any}$.
\end{lemma}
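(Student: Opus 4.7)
The plan is to prove the contrapositive: assuming $\disjoint{\pi_\mut}{\pi_\any}$, show that $\sigma(\pi_\any) = \stepped{\sigma}(\pi_\any)$. Unfolding the definition of disjointness, $\disjoint{\pi_\mut}{\pi_\any}$ means either (i) the root variables differ, or (ii) the roots agree but neither path is a prefix of the other. I would handle these two cases separately, since they use different properties of the value-context update.

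For case (i), let $\pi_\any = y.q_\any$ with $y \neq x$. The only modification made to $\sigma$ is at the entry for $x$: by assumption $\stepped{\sigma} = \sigma[x \mapsto \valuectx[v]]$, so every other binding, in particular the one for $y$, is untouched. Reducing $\pi_\any$ by looking up $y$ and then projecting along $q_\any$ therefore yields the same value in $\sigma$ and $\stepped{\sigma}$. This step is essentially bookkeeping on the stack update operation and should be mechanical.

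For case (ii), write $\pi_\mut = x.q_\mut$ and $\pi_\any = x.q_\any$ where neither of $q_\mut, q_\any$ is a prefix of the other. Here I would rely on the decomposition $\sigma \vdash \pi_\mut^\square \times x \Downarrow \valuectx$, which says that $\valuectx$ plugs its hole into the slot of $\sigma(x)$ reached by following $q_\mut$. The key sub-lemma I need is that for paths $q_\mut, q_\any$ that are incomparable in the prefix order, $(\valuectx[v]).q_\any = (\valuectx[\sigma(x).q_\mut]).q_\any = \sigma(x).q_\any$, because the projection along $q_\any$ diverges from $q_\mut$ at some tuple index, and thereafter visits a completely different subtree of the value than the one overwritten. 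This is a straightforward induction on the common prefix of $q_\mut$ and $q_\any$, combined with the semantics of $\valuectx[\cdot]$ on tuple values. Composing with the lookup of $x$ then gives $\sigma(\pi_\any) = \stepped{\sigma}(\pi_\any)$.

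The main obstacle I expect is the second case, specifically formalizing the ``diverging projections'' argument about $\valuectx$. Oxide's value-context judgment threads the hole through nested tuples, so I would first state and prove an auxiliary lemma of the form: if $q_\mut$ and $q_\any$ are prefix-incomparable and $\sigma \vdash \pi_\mut^\square \times x \Downarrow \valuectx$, then $(\valuectx[v]).q_\any = (\sigma(x)).q_\any$ for any $v$. Once that lemma is in hand, the contrapositive of the goal follows by pairing it with the trivial case-(i) argument, completing the proof.
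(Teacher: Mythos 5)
Your proposal is correct and follows essentially the same route as the paper's proof: the paper argues by contradiction rather than contraposition, but it likewise splits on whether $\pi_\any$ shares the root $x$ (its step~2) and then locates the index $n_\mut \neq n_\any$ where the two paths diverge, using an induction on the derivation of $\valuectx$ to show the $n_\any$-subtree is untouched by plugging $v$ into the hole --- exactly your ``diverging projections'' sub-lemma. No substantive difference.
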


\begin{proof} \ 
\begin{proofsteps}
    \item Assume $\sigma(\pi_\any) \neq \stepped{\sigma}(\pi_\any)$. Want to show $\notdisjoint{\pi_\mut}{\pi_\any}$.
    \item Because only $x$ is assigned, then $\pi_\any = \pi_\any^\square[x]$. 
    \item Assume for sake of contradiction that $\disjoint{\pi_\mut}{\pi_\any}$. Let $q$ be the shared path and $n_\mut, n_\any$ be the split, i.e. $\pi_\mut = x.q.n_\mut.q_\mut$ and $\pi_\any = x.q.n_\any.q_\any$.
    \item By \textsc{ER-Projection}, then $$\sigma \vdash x.q.n_\mut \Downarrow \valuectx[(v_0, \ldots, v_{n_\any}, \ldots, \square_{n_\mut}, \ldots, v_n)]$$
    \item $v_{n_\any} = \sigma(x.q.n_\any)$ by induction on the derivation of $\valuectx$.
    \item Therefore $\stepped{\sigma}(x.q.n_\mut) = v_{n_\any} = \sigma(x.q.n_\mut)$. 
    \item This is a contradiction with $\sigma(\pi_\any) \neq \stepped{\sigma}(\pi_\any)$, therefore $\notdisjoint{\pi_\mut}{\pi_\any}$.
\end{proofsteps}
\end{proof}

\newcommand{\nest}{-1pt}

\begin{lemma}[A place expression's loan set contains the place it points-to at runtime.]
\label{lem:pointers}
Let:
\begin{itemize}[leftmargin=*]
    \item $\sigma, \Sigma, \Gamma$ where $\Sigma \vdash \sigma : \Gamma$
    \item $p_\mut$ where $\bullet; \Gamma \vdash_\uniq{} p_\mut \Rightarrow \loanset$ and $\sigma \vdash p_\mut \Downarrow \pi_\mut$
    \item $p_\any$ where $\sigma \vdash p_\any \Downarrow \pi_\any$
\end{itemize}
Then $\notdisjoint{\pi_\any}{\pi_\mut} \implies \exists \,^\uniq{} p_\loan \in \loanset ~ . ~ \notdisjoint{p_\any}{p_\loan}$
\end{lemma}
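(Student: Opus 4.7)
The plan is to induct on the derivation of $\bullet; \Gamma \vdash_\uniq p_\mut \Rightarrow \loanset$, which tracks the structure of $p_\mut$ and mirrors how Oxide builds up the loan set one level of projection or dereference at a time. The guiding invariant is that every concrete place reachable via an alias of $p_\mut$ is ``covered'' by some $p_\loan \in \loanset$ whose place-expression shape witnesses the aliasing; from such a witness the disjointness predicate on place expressions will yield the required conclusion.

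For the base case $p_\mut = x$, we have $\pi_\mut = x$ and ${^\uniq x} \in \loanset$. The hypothesis $\notdisjoint{\pi_\any}{x}$ forces $\pi_\any = x.q$ for some path $q$. Since $p_\mut$ is usable uniquely under $\Gamma$, the borrow-check invariants carried by $\Sigma \vdash \sigma : \Gamma$ preclude any live reference to memory overlapping $x$. A short induction on the shape of $p_\any$ then shows it cannot contain a dereference that resolves into $x$, so $p_\any$ is syntactically rooted at $x$, giving $\notdisjoint{p_\any}{x}$. The projection case $p_\mut = p'.n$ reduces directly to the inductive hypothesis on $p'$: the loan set extends each prior loan with $.n$, which preserves overlap on both the concrete and the place-expression side, and the concrete projection commutes with evaluation.

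The dereference case $p_\mut = \ast p'$ is where the main difficulty lies. Here $\loanset$ is built from $\Gamma(r)$, where $r$ is the provenance in $p'$'s type, together with ${^\uniq \ast p'}$ itself. At runtime, $p'$ reduces to $\msf{ptr}~\pi''$ with $\pi_\mut = \pi''$, and we must reconnect the concrete target $\pi''$ to some loan ${^\uniq q} \in \Gamma(r)$. This requires an auxiliary lemma stating that a well-typed stack respects its provenance contexts: whenever a reference with provenance $r$ evaluates to $\msf{ptr}~\pi''$, there exists ${^\uniq q} \in \Gamma(r)$ with $\sigma \vdash q \Downarrow \pi''$. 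Granted this lemma, applying the induction to $q$ against the overlap between $\pi_\any$ and $\pi''$ produces a syntactic witness, which lifts back to a $p_\loan \in \loanset$ overlapping $p_\any$. I expect the statement and justification of this provenance-soundness lemma to be the principal obstacle, since it is the formal bridge between Oxide's static loan-set machinery and the operational semantics of dereferencing; a parallel subtle point is handling the case where $p_\any$ itself contains dereferences, which must be resolved by the same provenance-soundness lemma applied in the reverse direction.
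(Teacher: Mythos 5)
Your proposal matches the paper's proof in essentials: both induct on the derivation of $\bullet; \Gamma \vdash_\uniq{} p_\mut \Rightarrow \loanset$, both dispose of the no-dereference case by using the uniqueness premise to rule out any live loan overlapping $\pi_\mut$ (forcing $p_\any$ to be a concrete place), and both handle the dereference case by the provenance-soundness bridge you identify, which is exactly Oxide's Lemma~E.6 relating a reference's runtime target to a loan in $\Gamma(r)$, followed by the inductive hypothesis on the substituted place expression. The only cosmetic difference is that you split out a separate projection case where the paper folds projections into the \textsc{O-SafePlace}/\textsc{O-Deref} rule contexts.
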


\begin{proof}
Proof by induction on the derivation of \\ $\bullet; \Gamma \vdash_\uniq{} p_\mut \Rightarrow \loanset$
\begin{proofcases}
    \item \textsc{O-SafePlace}:
    \begin{proofsteps}[\nest]
        \item If $p_\any \neq \pi_\any$ then $\exists r ~ . ~ \pi_\any \in \Gamma(r)$. But the first premise of \textsc{O-SafePlace}, it must be the case that $\disjoint{\pi_\any}{\pi_\mut}$, a contradiction. Therefore $p_\any = \pi_\any$.
        \item Then by the conclusion of \textsc{O-SafePlace}, \\ $\loanset = \{\,^\uniq \pi_\mut\,\}$. Then the theorem holds for $p_\loan = \pi_\mut$.
    \end{proofsteps}
    \item \textsc{O-Deref}:
    \begin{proofsteps}[\nest]
        \item Let $p_\mut = p_\mut^\square[\ast \pi_\mutptr]$ and $\Gamma(\pi_\mutptr) = \eref{\uniq}{r}{\tau}$. 
        \item By extension of Lemma E.6\,\citeyearpar[p.\,47]{weiss2019oxide}, 
        $$\exists \,^\uniq p_i \in \Gamma(r) ~ . ~ \sigma \vdash p_\mut^\square[p_i] \Downarrow \pi_\mut$$
        \item By the inductive hypothesis on 
        $$\bullet; \Gamma \vdash_\uniq p_\mut^\square[p_i] \Rightarrow \{\overline{^\uniq p'_i}\}$$ then: $$\exists \,^\uniq p_\loan \in \{\overline{^\uniq p'_i}\} ~ . ~ \notdisjoint{p_\any}{p_\loan}$$
        \item Because $\{\overline{^\uniq p'_i}\} \subseteq \loanset$, then the theorem holds.
    \end{proofsteps}
    \item \textsc{O-DerefAbs}: does not apply since $\Delta = \bullet$.
\end{proofcases}
\end{proof}

\begin{lemma}[A function only mutates unique references in its argument]
\label{lem:mutargs}
Let:
\begin{itemize}[leftmargin=*]
\item $\Gamma, \pi_\arrg, \sigma$ where $\Gamma(\pi_\arrg) = \tyvar{si}$ and $\Sigma \vdash \sigma : \Gamma$
\item $f$ where $\evalsto{\sigma}{f(\pi_\arrg)}{\stepped{\sigma}}{\_}$
\item $\stepped{\sigma}' =  \stepped{\sigma}[\forall p_\loan \in \loans{\uniq}{\pi_\arrg, \tyvar{si}, \bullet, \Gamma} ~ . ~ p_\loan \mapsto \sigma(p_\loan)]$
\end{itemize}
Then $\sigma = \stepped{\sigma}'$.
\end{lemma}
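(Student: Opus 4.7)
The plan is to show the contrapositive at the level of individual places: for every place $\pi_\any \in \sigma$ with $\sigma(\pi_\any) \neq \stepped{\sigma}(\pi_\any)$, there exists $p_\loan \in \loans{\uniq}{\pi_\arrg, \tyvar{si}, \bullet, \Gamma}$ such that $\sigma \vdash p_\loan \Downarrow \pi_\any$. Once this is established, the pointwise override in the definition of $\stepped{\sigma}'$ restores the pre-call value at $\pi_\any$, and since no other places change, $\sigma = \stepped{\sigma}'$.

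First, I would unfold the evaluation $\evalsto{\sigma}{f(\pi_\arrg)}{\stepped{\sigma}}{\_}$ using Oxide's function-call semantics: the call pushes a fresh frame binding the parameter to $\sigma(\pi_\arrg)$, evaluates the body to completion, and pops the frame. Any place $\pi_\any$ lying in the pushed frame is not in $\sigma$ to begin with, so can be ignored; thus the differing $\pi_\any$ must live in a frame present in both $\sigma$ and $\stepped{\sigma}$. Next, I would do induction on the multi-step evaluation of the body, maintaining the invariant that every per-step mutation to a pre-existing frame arises from either an assignment $\pi := e$ or a dereferencing assignment $p := e$ whose target place expression type-checks in the current typing context.

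For each such step, I would apply Lemma~\ref{lem:places} to reduce to the mutation of the canonical root place, and then apply Lemma~\ref{lem:pointers} with $p_\mut$ taken to be the place expression on the LHS of the assignment to obtain a loan $\,^\uniq p_\loan$ in its loan set with $\notdisjoint{p_\any}{p_\loan}$. By the well-formedness of Oxide's provenance contexts and the fact that all free provenances/places in the function body trace back to the argument (since the body type-checks in a context whose only live place is the parameter), this loan must itself be rooted in a reference reachable from $\pi_\arrg$ through the $\refs{\uniq}{\cdot,\cdot}$ structure, i.e.\ it lies in $\loans{\uniq}{\pi_\arrg, \tyvar{si}, \bullet, \Gamma}$. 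Ownership safety rules out the symmetric case of mutations going through shared references, which is why only the $\uniq$ variant of $\loans{}$ is needed.

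The main obstacle will be the last step: formally connecting loans appearing in \emph{intermediate} typing contexts during execution back to loans computable from the \emph{initial} argument $\pi_\arrg$ and its type $\tyvar{si}$ via $\loans{\uniq}{\pi_\arrg, \tyvar{si}, \bullet, \Gamma}$. This will likely require an auxiliary lemma stating that if the body's typing context tracks a provenance $r$ bound to some loan set, that loan set is a subset of (or refines to places inside) the initial loans of the argument. This is essentially a form of preservation for loan closures under borrow, reborrow, subtyping, and frame manipulation, and I would either prove it inline or invoke the corresponding Oxide metatheory from \citet{weiss2019oxide}.
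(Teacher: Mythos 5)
Your proposal is correct in substance and follows the same overall skeleton as the paper's proof: unfold the call semantics (frame push, body evaluation, frame pop), localize all mutations of the outer stack to assignment steps, connect each mutation target back to a loan of the argument, and use the typing of the body to restrict those loans to the unique ones. The divergence is in how the middle step is discharged. The paper does not invoke \Cref{lem:pointers} here at all; it argues by direct inspection of the operational semantics that \textsc{E-Borrow} is the only way to create a pointer, that the only places of $\sigma$ accessible from the body are $\loans{\shrd}{\pi_\arrg, \tyvar{si}, \bullet, \Gamma}$, and hence that any mutated place expression has the form $p^\square[p_\loan]$ for such a $p_\loan$; \textsc{T-AssignDeref} together with \textsc{O-Deref} then forces the reference traversed to be $\uniq$, putting $p_\loan$ in $\loans{\uniq}{\pi_\arrg, \tyvar{si}, \bullet, \Gamma}$. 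You instead route through \Cref{lem:pointers} applied to intermediate configurations, which is why you correctly identify the need for an auxiliary preservation lemma relating loan sets in intermediate typing contexts back to the initial argument's loans --- that is precisely the step the paper handles informally by the ``only accessible places'' assertion, so your version is more explicit about the real proof obligation rather than missing one. Two small points: your use of \Cref{lem:places} is unnecessary in this lemma (it matters for noninterference, not for showing the override reverts all changes), and the changed place $\pi_\any$ need not be exactly where some $p_\loan$ points but only a projection of it; the conclusion still holds because overriding $p_\loan \mapsto \sigma(p_\loan)$ restores the entire value rooted there.
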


\begin{proof}\
\begin{proofsteps}
    \item Let: $$\Sigma(f) = \fndef$$
    \item By \textsc{E-App}, \textsc{E-EvalCtx}, and \textsc{E-Framed}: 
    \begin{align*}
    \Sigma \vdash &(\sigma; f(\pi)) \rightarrow (\sigma \stacksep x \mapsto \sigma(\pi);~\framed{e}) \xrightarrow{\ast} \\ &(\stepped{\sigma} \stacksep \varsigma;~ \framed{v}) \rightarrow (\stepped{\sigma};~ v)
    \end{align*}
    \item By inspection of the operational semantics, the only rule that could modify $\sigma$ (as apart from $\varsigma$) is \textsc{E-Assign}. Assume that $p_\mut := e'$ is executed under stack $\sigma_\mut \stacksep \varsigma$ during $f(\pi_\arrg)$ where $\sigma_\mut \vdash p_\mut \Downarrow \pi_\mut$.  
    \item By inspection of the operational semantics, the only way to create a pointer is \textsc{E-Borrow} on a place $p$. The only places in $\sigma_\mut$ that are accessible from $e$ are \\ $\loans{\shrd}{\pi_\arrg, \tyvar{si}, \bullet, \Gamma}$, so it must be that $p_\mut = p^\square[p_\loan]$ and $p_\loan \in \loans{\shrd}{\pi_\arrg, \tyvar{si}, \bullet, \Gamma}$.
    \item Because $e$ is well-typed under $x : \tyvar{si}_a$, then any shared references in $x$ cannot be mutated: \textsc{T-AssignDeref} requires $\vdash_\uniq p_\mut \Rightarrow \loanset$, which by \textsc{O-Deref} requires that the reference under $p_\mut$ has type $\eref{\uniq}{r}{\tau}$. Therefore $p_\loan \in \loans{\uniq}{\pi_\arrg, \tyvar{si}, \bullet, \Gamma}$.
    \item Hence, all mutated places in $\stepped{\sigma}$ are projections of a $p_\loan$. Therefore $\stepped{\sigma}[\forall p_\loan \in \loans{\uniq{}}{\pi_\arrg, \tyvar{si}, \bullet, \Gamma} ~ . ~ p_\loan \mapsto \sigma(p_\loan)]$ reverts all possible mutations, and $\sigma = \stepped{\sigma}'$.
\end{proofsteps}
\end{proof}

\begin{lemma}[A function's effects are only influenced by its argument.]
\label{lem:eqarg}
Let:
\begin{itemize}
\item $\Gamma, \pi, \sigma_i$ where $\Gamma(\pi_\arrg) = \tyvar{si}$ and $i \in \{1, 2\}$ and $\Sigma \vdash \sigma_i : \Gamma$
\item $f$ where $\evalsto{\sigma_i}{f(\pi_\arrg)}{\stepped{\sigma}_i}{v_i}$
\item $P = \{\pi_\arrg\} \cup \loans{\shrd}{\pi_\arrg, \tyvar{si}, \bullet, \Gamma}$
\end{itemize}

$\sigma_1 \stackeq{P} \sigma_2 \implies \stepped{\sigma}_1 \stackeq{P} \stepped{\sigma}_2 \wedge v_1 = v_2$
\end{lemma}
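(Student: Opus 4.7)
The plan is to unfold the operational semantics of $f(\pi_\arrg)$ and then establish a lockstep-execution invariant for the body. First, by \textsc{E-App}, \textsc{E-EvalCtx}, and \textsc{E-Framed}, each evaluation $\evalsto{\sigma_i}{f(\pi_\arrg)}{\stepped{\sigma}_i}{v_i}$ pushes a fresh frame $\varsigma_i = (x \mapsto \sigma_i(\pi_\arrg))$, evaluates the body $e$ of $f$ under $\sigma_i \stacksep \varsigma_i$ to a terminal configuration, and discards the frame. Since $\pi_\arrg \in P$, the assumption $\sigma_1 \stackeq{P} \sigma_2$ gives $\sigma_1(\pi_\arrg) = \sigma_2(\pi_\arrg)$ and hence $\varsigma_1 = \varsigma_2$, so both body evaluations start from states that coincide on $P$ together with the new local frame.

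The core of the proof is the following invariant, proved by induction on the derivation of the body evaluation: at every intermediate configuration, the two runs apply the same operational rule, produce the same intermediate values, and the two stacks agree on $P$ together with the currently active local frames. I would case-analyze each operational rule. For reads (\textsc{E-Move}, \textsc{E-Copy}, \textsc{E-Borrow}) the place expression is rooted in the current local frame, and by \Cref{lem:pointers} any dereference resolves to a concrete place that either lies in the current frame or was reached through a pointer ultimately borrowed from $x$; in the latter case that place lies in $P$, where both stacks agree, so the produced value is identical. For writes (\textsc{E-Assign}) the right-hand side is equal by the inductive hypothesis, the target is similarly shared between the runs by the same argument, and both stacks are updated identically, preserving the invariant. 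For a nested function call $g(\pi')$ inside $e$, I would invoke this lemma inductively with an auxiliary $P'$ determined by $\pi'$ and the current invariant.

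Combining lockstep with \Cref{lem:mutargs} yields the conclusion: that lemma confines all mutations to $\loans{\uniq}{\pi_\arrg, \tyvar{si}, \bullet, \Gamma}$, a subset of $P$, so the writes that occur happen identically in both runs and keep the stacks equal on $P$. The equal terminal values give $v_1 = v_2$ directly, and the invariant then gives $\stepped{\sigma}_1 \stackeq{P} \stepped{\sigma}_2$ after the frame is popped.

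The main obstacle is formulating and maintaining the lockstep invariant through frame-changing constructs, especially let-bindings and nested calls. For a nested call $g(\pi')$ the auxiliary set $P'$ must itself be justified by the current invariant; this reduces to showing that $\pi'$ and its shared loans are places on which both stacks already agree, which follows because $\pi'$ is a place expression rooted in the current local frame and, by the ownership-safety argument underlying \Cref{lem:mutargs}, its shared loans lie within $P$ together with the local frames. Branching is similarly delicate, since the condition must agree between runs, but this again follows from the invariant because the condition depends only on values accessible from $x$ and the local frame.
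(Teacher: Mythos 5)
Your proposal is correct and takes essentially the same route as the paper: identify that the body of $f$ can only read the outer stack through $\pi_\arrg$ and the shared loans of the argument (all contained in $P$, where the two stacks agree), confine writes via \Cref{lem:mutargs}, and conclude that the two runs proceed identically. The only difference is that you make explicit, as a lockstep induction on the evaluation derivation, the determinism step that the paper's proof states informally in its final line ("every readable input for $f$ is equal, \ldots the evaluation should result in the same effects and same output").
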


\begin{proof}\
\begin{proofsteps}
    \item As with the proof for \Cref{lem:mutargs}, have $(\sigma_i, f(\pi_\arrg)) \xrightarrow{\ast} (\stepped{\sigma}_i; v_i)$.
    \item By inspection of the operational semantics, the only rule that could read $\sigma$ is \textsc{E-Copy} via the premise $(\sigma_\rd; p_\rd) \rightarrow (\sigma_\rd; v)$ where $\sigma_\rd \vdash p_\rd \Downarrow \pi_\rd \mapsto v$.
    \item As with the proof for \Cref{lem:mutargs}, the only possible value of $p_\rd = p^\square[p_\loan]$ where $p_\loan \in \loans{\shrd}{\pi_\arrg, \tyvar{si}, \bullet, \Gamma}$.
    \item Because $p_\loan \in P$ and $\sigma_1 \stackeq{P} \sigma_2$ then $\sigma_1(p_\loan) = \sigma_2(p_\loan)$.
    \item Similarly because $\pi_\arrg \in P$ then $\sigma_1(\pi_\arrg) = \sigma_2(\pi_\arrg)$.
    \item Therefore every readable input for $f$ is equal, and for the same function body $e$, the evaluation should result in the same effects on $\sigma$ and same output $v$. Hence, $\stepped{\sigma}_1 \stackeq{P} \stepped{\sigma}_2$ and $v_1 = v_2$. 
\end{proofsteps}
\end{proof}

\subsubsection{Noninterference}
\label{sec:appendix_noninterference}

\begin{theorem}[Noninterference]
Let $e$ such that 
$$\tcnew{\Sigma}{\bullet}{\Gamma}{\Theta}{e}{\tau \withslice{\kappa}}{\Gamma'}{\Theta'}$$
For $i \in \{1, 2\}$, let $\sigma_i$ such that 
$$\Sigma \vdash \sigma_i : \Gamma \hspace{12pt} \text{and} \hspace{12pt} \evalsto{\sigma_i}{e}{\stepped{\sigma}_i}{v_i}$$
Then:
\begin{enumerate}[(a)]
    \item $\sigma_1 \stackdepeq{\Theta}{\kappa} \sigma_2 \implies v_1 = v_2$
    \item $\forall p \mapsto \kappa_p \in \Theta' ~ . ~ \sigma_1 \stackdepeq{\Theta}{\kappa_p} \sigma_2 \implies \stepped{\sigma}_1(p) = \stepped{\sigma}_2(p)$
\end{enumerate}
\end{theorem}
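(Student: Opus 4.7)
The plan is to prove the theorem by structural induction on the typing derivation
$$\tcnew{\Sigma}{\bullet}{\Gamma}{\Theta}{e}{\tau \withslice{\kappa}}{\Gamma'}{\Theta'}.$$
I would verify conclusions (a) and (b) simultaneously in each case, since they reinforce one another: the value-equality part supplies the information needed to track fresh writes recorded in $\Theta'$, while the stack-equality part supplies what subsequent reads require.

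The base and pure-computation cases are direct. For \Cref{tr:tu32} the value is a constant independent of the stack, and $\Theta' = \Theta$. For \Cref{tr:tmove} and \Cref{tr:tcopy}, $\kappa$ by construction contains $\Theta(\pi)$ (respectively $\medcup \Theta(p')$ over the loan set), so the hypothesis $\sigma_1 \stackdepeq{\Theta}{\kappa} \sigma_2$ immediately yields equal reads. The compound rules \Cref{tr:ttuple}, \Cref{tr:tlet}, \Cref{tr:tseq}, \Cref{tr:tletprov}, and \Cref{tr:tborrow} follow by threading the inductive hypothesis through each subexpression. For \Cref{tr:tlet} in particular, the update $\Theta_1[\forall \pi = \pi^\square[x] \, . \, \pi \mapsto \kappa_1]$ is justified by (a) applied to $e_1$: the freshly-bound slot for $x$ holds the same value on both sides whenever $\kappa_1$ is respected.

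For the mutation and call rules, the lemmas of \Cref{sec:appendix_lemmas} do the heavy lifting. In \Cref{tr:tassign}, \Cref{lem:places} ensures that only conflicts of $\pi$ can change value, which matches $\updateconflicts{\Theta_1}{\pi}{\kappa}$ exactly: non-conflicting places retain both their dependency set and their runtime value, while conflicting $p$ satisfy $\kappa \subseteq \Theta_2(p)$, so agreement on $\Theta_2(p)$ implies agreement on the new value via (a) applied to $e$. In \Cref{tr:tassignderef}, we additionally invoke \Cref{lem:pointers} to argue that the runtime place actually mutated is conflicted by some $p' \in \loanset$, so updating the conflicts of every $p' \in \loanset$ soundly covers the memory effect. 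For \Cref{tr:tapp}, \Cref{lem:mutargs} guarantees that the only mutated places lie in $\loans{\uniq}{\pi, \tyvar{si}_a, \bullet, \Gamma_2}$, while \Cref{lem:eqarg} guarantees that if both stacks agree on $\{\pi\} \cup \loans{\shrd}{\pi, \tyvar{si}_a, \bullet, \Gamma_2}$ then they produce the same return value and agree on the post-stack at those places---precisely what $\kappa_\arrg$ captures.

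The main obstacle will be \Cref{tr:tbranch}, where control dependencies interact with the distributed-union merge $\Theta_2 \doublecup \Theta_3$. The key observation is that the rule's final update $\Theta'' = \Theta'[\forall p \in \Theta' \setminus \Theta_1 \, . \, p \mapsto \Theta'(p) \cup \kappa_1]$ ensures $\kappa_1 \subseteq \Theta''(p)$ for every place whose dependencies changed in either branch. Thus whenever $\sigma_1 \stackdepeq{\Theta}{\Theta''(p)} \sigma_2$, agreement on $\kappa_1$ forces both executions to take the same branch, reducing the obligation to the inductive hypothesis on that branch; places untouched by both branches have $\Theta''(p) = \Theta_1(p)$ and their values are preserved regardless. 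Part (a) follows analogously: $\kappa_1 \cup \kappa_2 \cup \kappa_3$ first pins down the branch and then yields equal values via the IH on the chosen branch. Once this case is settled, the remainder of the cases reduce to routine bookkeeping.
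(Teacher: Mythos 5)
Your proposal follows essentially the same route as the paper's proof: induction on the typing derivation, with the same division of labor---conclusion (a) handled directly for reads via $\Theta(\pi) \subseteq \kappa$, \Cref{lem:places} for \Cref{tr:tassign}, \Cref{lem:pointers} added for \Cref{tr:tassignderef}, and \Cref{lem:mutargs} together with \Cref{lem:eqarg} for \Cref{tr:tapp}. Your sketch of the \Cref{tr:tbranch} case (agreement on $\kappa_1$ pins down the branch, reducing to the IH on the taken branch) is sound and in fact goes beyond the paper, whose appendix proof omits that case.
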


\begin{proof}
Proof by induction over the derivation of $e : \tau$. 
\begin{itemize}[itemsep=8pt,leftmargin=*]
    \item \Cref{tr:tu32}: $e = n_\loc$ where $\kappa = \{\loc\}$. (a) is trivial because $n$ always evaluates to $n$, and (b) is trivial because $n$ has no effects.
    
     \item \Cref{tr:tmove}: $e = \pi$ where $\kappa = \Theta(\pi)$. (b) is trivial because $\pi$ has no effects, so we focus on (a): $\sigma_1 \stackdepeq{\Theta}{\Theta(\pi)} \sigma_2 \implies v_1 = v_2$
    \begin{proofsteps}
        \item By the definition of equivalence of stacks, $\Theta(\pi) \subseteq \Theta(\pi) \implies \sigma_1(\pi) = \sigma_2(\pi)$.
        \item By \textsc{E-Move}, $v_i = \sigma_i(\pi)$. 
        \item Therefore $v_1 = v_2$.
    \end{proofsteps}
    
    \item \Cref{tr:tcopy}: the proof is the same as for \Cref{tr:tmove}.
    
    \item \Cref{tr:tseq}: $e = e_1; e_2$ where $\kappa = \kappa_2$.
    \begin{proofcases}
        \item $\sigma_1 \stackdepeq{\Theta}{\kappa_2} \sigma_2 \implies v_1 = v_2$
        \begin{proofsteps}[\nest]
            \item By \textsc{E-EvalCtx}, $\evalsto{\sigma_i}{e_1}{\sigma^{e_1}_i}{ v^{e_1}_i}$.
            \item By \textsc{E-Seq}, $\stepsto{\sigma_i^{e_1}}{(v_i^{e_1}; e_2)} {\sigma_i^{e_1}}{e_2} \xrightarrow{\ast} (\sigma^{e_2}_i; v^{e_2}_i)$. WTS $v^{e_2}_1 = v^{e_2}_2$.
            \item By the IH for $e_2$, $v^{e_2}_1 = v^{e_2}_2$ if $\sigma^{e_1}_1
            \stackdepeq{\Theta_1}{\kappa_2} \sigma^{e_1}_2$.
            \item Let $\pi \mapsto \kappa_\pi \in \Theta_1$. WTS $\kappa_\pi \subseteq \kappa_2 \implies \sigma^{e_1}_1(\pi) = \sigma^{e_1}_2(\pi)$.
            \item By the IH for $e_1$, $\sigma^{e_1}_1(\pi) = \sigma^{e_1}_2(\pi)$ if $\sigma_1 \stackdepeq{\Theta}{\kappa_\pi} \sigma_2$.
            \item Let $\pi' \mapsto \kappa_{\pi'} \in \Theta_1$ such that $\kappa_{\pi'} \subseteq \kappa_\pi$. WTS $\sigma_1(\pi') = \sigma_2(\pi')$.
            \item By assumption, because $\kappa_\pi \subseteq \kappa_2 \wedge \kappa_{\pi'} \subseteq \kappa_\pi$, then $\kappa_{\pi'} \subseteq \kappa_2$. 
            \item By assumption, $\sigma_1(\pi') = \sigma_2(\pi')$.
        \end{proofsteps}
        
        \item $\forall \pi \mapsto \kappa_\pi \in \Theta_2 ~ . ~ \sigma_1 \stackdepeq{\Theta}{\kappa_\pi} \sigma_2 \implies \stepped{\sigma}_1(\pi) = \stepped{\sigma}_2(\pi)$ 
        \begin{proofsteps}[\nest]
            \item By the IH for $e_2$, $\stepped{\sigma}_1(\pi) = \stepped{\sigma}_2(\pi)$ if $\sigma^{e_1}_1 \stackdepeq{\Theta_1}{\kappa_\pi} \sigma^{e_1}_2$.
            \item The proof that $\sigma^{e_1}_1 \stackdepeq{\Theta_1}{\kappa_\pi} \sigma^{e_1}_2$ follows similarly as above.
        \end{proofsteps}
    \end{proofcases}
    
    \item \Cref{tr:tlet}: $e = ``\olet{x}{\tyvar{si}_a}{e_1}{e_2}"$. The proof of (b) follows from the proof for \cref{tr:tseq}. We focus on (a): $\sigma_1 \stackdepeq{\Theta}{\kappa_2} \sigma_2 \implies v_1 = v_2$
    \begin{proofsteps}
        \item By \textsc{E-EvalCtx}, \textsc{E-Let}, \textsc{E-EvalCtx}, \textsc{E-Shift}: \\ 
        $\evalsto{\sigma_i}{\msf{let}~x : \tyvar{si}_a = e_1; e_2}{\sigma^{e_1}_i}{\msf{let}~x : \tyvar{si}_a = v_{e_1}^i;~ e_2} \rightarrow (\sigma^{e_1}_i, x \mapsto v_{e_1}^i;~ \msf{shift}~e_2) \xrightarrow{\ast} \\ (\stepped{\sigma}_i, x \mapsto \_;~ \msf{shift}~v^{e_2}_i) \rightarrow (\stepped{\sigma}_i;~ v^{e_2}_i)$.
        \item Let $\sigma^x_i = \sigma^{e_1}_i, x \mapsto v^{e_1}_i$. By IH (a) for $e_2$, if $\sigma^x_1 \stackdepeq{\Theta_1'}{\kappa_2} \sigma^x_2$ then $v^{e_2}_1 = v^{e_2}_2$.
        \item Let $\pi$ such that $\Theta_1'(\pi) \subseteq \kappa_2$. WTS $\sigma^x_1(\pi) = \sigma^x_2(\pi)$.
        \item If $\Theta_1'(\pi) \subseteq \Theta(\pi)$, then $\sigma^x_1(\pi) = \sigma^x_2(\pi)$ is true by assumption.
        \item Otherwise must be $\pi = \pi^\square[x]$ and $\kappa_q \subseteq \Theta_1'(\pi)$. 
        \item By IH (a) for $e_1$, if $\sigma_1 \stackdepeq{\Theta}{\kappa_1} \sigma_2$ then $v^{e_1}_1 = v^{e_1}_2$.
        \item Because $\sigma_1 \stackdepeq{\Theta}{\kappa_2} \sigma_2$ and $\kappa_1 \subseteq \kappa_2$ by (3), then  $\sigma_1 \stackdepeq{\Theta}{\kappa_1} \sigma_2$ and $v^{e_1}_1 = v^{e_1}_2$.
        \item Therefore $\sigma^x_1(\pi) = \sigma^x_2(\pi)$.
    \end{proofsteps}
    
    \item \Cref{tr:tassign}: $e = ``\pi_\mut := e"$. Because $v_i = ()$ then (a) is trivial, so we focus on (b): \\ $\forall \pi_\any \mapsto \kappa_\any \in \Theta_1' ~ . ~ \sigma_1 \stackdepeq{\Theta}{\kappa_\any} \sigma_2 \implies \stepped{\sigma}_1(\pi_\any) = \stepped{\sigma}_2(\pi_\any)$
    \begin{proofsteps}
        \item By \textsc{E-EvalCtx}, $\evalsto{\sigma_i}{e}{\sigma^e_i}{v^e_i}$. By IH (b) on $e$, then $\sigma^e_1(\pi_\any) = \sigma^e_2(\pi_\any)$.
        \item By \textsc{E-Assign}, $\Sigma \vdash (\sigma^e_i;~\pi_\mut := v^e_i) \\ \rightarrow (\sigma^e_i[x \mapsto \valuectx_i[v^e_i]];~())$ where $\pi_\mut = \pi_\mut^\square[x]$ and $\sigma^e_i \vdash \pi_\mut \Downarrow \valuectx_i$.
        \item By \Cref{lem:places}, if $\sigma^e_i(\pi_\any) \neq \stepped{\sigma}_i(\pi_\any)$ then $\notdisjoint{\pi_\any}{\pi_\mut}$.
        \item By \Cref{tr:tassign}, because $\notdisjoint{\pi_\any}{\pi_\mut}$ then \\ $\Theta'_1(\pi_\any) = \Theta_1(\pi_\any) \cup \kappa_e$.
        \item Because $\kappa_e \subseteq \Theta'_1(\pi_\any)$, then $\sigma_1 \stackdepeq{\Theta}{\kappa_e} \sigma_2$.
        \item By IH (a) on $e$, then $v^e_1 = v^e_2$.
        \item Therefore $\stepped{\sigma}_1(\pi_\any) = \stepped{\sigma}_2(\pi_\any)$.
    \end{proofsteps}
    
    \item \Cref{tr:tassignderef}: $e = ``p_\mut := e"$. Like \textsc{T-Assign}, we focus on (b): $\forall \pi_\any \mapsto \kappa_\any \in \Theta_1' ~ . ~ \sigma_1 \stackdepeq{\Theta}{\kappa_\any} \sigma_2 \implies \stepped{\sigma}_1(\pi_\any) = \stepped{\sigma}_2(\pi_\any)$
    \begin{proofsteps}
        \item By \textsc{E-EvalCtx} and \textsc{E-Assign}, \\ $\evalsto{\sigma_i}{p_\mut := e}{\sigma^e_i[x_i \mapsto \valuectx_i[v^e_i]]}{()}$ \\ where $\sigma^e_i \vdash p_\mut \Downarrow \valuectx_i \times \pi_{\mut,i}^\square[x_i]$.
        \item By IH (b) on $e$, $\sigma^e_1(\pi_\any) = \sigma^e_2(\pi_\any)$.
        \item By \Cref{lem:places}, only consider the case where $\notdisjoint{\pi_{\mut,i}}{\pi_\any}$.
        \item By \Cref{lem:pointers}, because $\notdisjoint{\pi_{\mut,i}}{\pi_\any}$ then \\ $\exists \,^\uniq p_\loan \in \loanset ~ . ~ \notdisjoint{\pi_\any}{p_\loan}$.
        \item By \Cref{tr:tassignderef}, then $\kappa \subseteq \Theta_1'(\pi_\any)$, and $v^e_1 = v^e_2$, and $\stepped{\sigma}_1(\pi_\any) = \stepped{\sigma}_2(\pi_\any)$ as in the proof for \Cref{tr:tassign}.
    \end{proofsteps}
    
    \item \Cref{tr:tapp}: $e = ``f(\pi_\arrg)"$ where $\kappa = \kappa_\arrg$.
    \begin{proofcases}
        \item $\sigma_1 \stackdepeq{\Theta}{\kappa_\arrg} \sigma_2 \implies v_1 = v_2$
        \begin{proofsteps}[\nest]
            \item Let $p \in \{\pi_\arrg\} ~ \cup ~ \loans{\shrd}{\pi_\arrg, \tyvar{si}_a, \bullet, \Gamma_2}$. \\By \Cref{lem:eqarg}, if $\sigma_1(p) = \sigma_2(p)$, then $v_1 = v_2$. 
            \item By \Cref{tr:tapp}, $\Theta(p) \subseteq \kappa_\arrg$. Therefore $\sigma_1(p) = \sigma_2(p)$ and hence $v_1 = v_2$.
        \end{proofsteps}

        \item $\forall \pi_\any \mapsto \kappa_\any \in \Theta_2 ~ . ~ \sigma_1 \stackdepeq{\Theta}{\kappa_\any} \sigma_2 \implies \stepped{\sigma}_1(\pi_\any) = \stepped{\sigma}_2(\pi_\any)$
        \begin{proofsteps}[\nest]            
            \item By \Cref{lem:mutargs} and \Cref{lem:places}, if $\sigma_i(\pi_\any) \neq \stepped{\sigma}_i(\pi_\any)$ then: $\exists p_\loan \in \loans{\uniq}{\pi_\arrg, \tyvar{si}, \bullet, \Gamma_2} ~ . ~ (\sigma_i \vdash p_\loan \Downarrow \pi_{\loan,i}) \wedge \notdisjoint{\pi_\any}{\pi_{\loan,i}}$.
            \item By \Cref{tr:tapp}, then $\kappa_\arrg \subseteq \kappa_\any$. 
            \item By \Cref{lem:eqarg}, then $\stepped{\sigma}_1(\pi_\any) = \stepped{\sigma}_2(\pi_\any)$.
        \end{proofsteps}
    \end{proofcases}
    
\end{itemize}
\end{proof}

\subsection{Evaluation details}
\label{sec:eval_details}

\begin{table*}
\begin{tabular}{l | l | l | p{3cm}}
    \textbf{Project} & \textbf{Crate} & \textbf{Commit} & \textbf{Features} \\ \hline
    \href{https://github.com/rayon-rs/rayon/}{rayon} &  & \Verb|c571f8ffb4f74c8c09b4e1e6d9979b71b4414d07| & \emph{all} \\ \hline
    \href{https://github.com/SergioBenitez/Rocket}{Rocket} & core/lib & \Verb|8d4d01106e2e10b08100805d40bfa19a7357e900| & \emph{none} \\ \hline
    \href{https://github.com/ctz/rustls}{rustls} & rustls & \Verb|cdf1dada21a537e141d0c6dde9c5685bb43fbc0e| & \emph{all} \\ \hline
    \href{https://github.com/mozilla/sccache}{sccache} & & \Verb|3f318a8675e4c3de4f5e8ab2d086189f2ae5f5cf| & \emph{none} \\ \hline
    \href{https://github.com/dimforge/nalgebra}{nalgebra} & & \Verb|984bb1a63943aa68b6f26ff4a6acf8f68b833b70| & {\small rand, arbitrary, sparse, debug, io, libm} \\ \hline
    \href{https://github.com/image-rs/image}{image} & & \Verb|e916e9dda5f4253f6cc4557b0fe5fa3876ac18e5| & \emph{none} \\ \hline
    \href{https://github.com/hyperium/hyper}{hyper} & & \Verb|ed2fdb7b6a2963cea7577df05ddc41c56fee7246| & {\small full} \\ \hline
    \href{https://github.com/mrDIMAS/rg3d}{rg3d} & & \Verb|ca7b85f2b30e45b82caee0591ee1abf65bb3eb00| & \emph{all} \\ \hline
    \href{https://github.com/xiph/rav1e}{rav1e} & & \Verb|1b6643324752785e7cd6ad0b19257f3c3a9b2c6a| & \emph{none} \\ \hline
    \href{https://github.com/RustPython/RustPython}{RustPython} & vm & \Verb|9143e51b7524a5084d5ed230b1f2f5b0610ac58b| & {\small compiler}
\end{tabular}
\cprotect\caption{Build configuration for each package used in the evaluation. Crate is the sub-directory of the crate used for evaluation. Commit is the point in the project's Git history used. Features is the flags passed to \Verb|cargo|, where \emph{none} means ``no feature flags'' and \emph{all} means \Verb|--all-features|.}
\label{tab:build_config}
\end{table*}

\Cref{tab:build_config} provides the build configurations needed to precisely reproduce the codebases used in \Cref{sec:evaluation}. We also provide a Zenodo artifact containing a Docker image that can exactly reproduce the results in \Cref{sec:evaluation}: \url{https://zenodo.org/record/6327882}

\balance

\bibliography{bibliography}

\end{document}